\documentclass[12pt]{article}
\usepackage{amsmath}
\usepackage{amssymb}
\usepackage{amsthm}
\usepackage{tikz}
\usetikzlibrary{decorations.markings,decorations.pathreplacing,calc}
\usetikzlibrary{arrows.meta}
\usepackage{algorithm}
\usepackage{algpseudocode}
\usepackage{thm-restate}
\usepackage{verbatim}
\usepackage[framemethod=tikz]{mdframed}
\usepackage{complexity}

\usepackage[margin=1in]{geometry}
\usepackage{subfig}
\usepackage[utf8]{inputenc}

\usepackage[colorlinks = true]{hyperref}

\usepackage{xcolor}
\definecolor{darkred}  {rgb}{0.5,0,0}
\definecolor{darkblue} {rgb}{0,0,0.5}
\definecolor{darkgreen}{rgb}{0,0.5,0}

\hypersetup{
  urlcolor   = blue,         
  linkcolor  = darkblue,     
  citecolor  = darkgreen,    
  filecolor  = darkred       
}

\newcommand{\be}{\begin{equation}}
\newcommand{\ee}{\end{equation}}
\newcommand{\ba}{\begin{array}}
\newcommand{\ea}{\end{array}}
\newcommand{\bea}{\begin{eqnarray}}
\newcommand{\eea}{\end{eqnarray}}

\newcommand{\calL}{{\cal L }}

\newcommand{\calK}{{\cal K }}

\newcommand{\calX}{{\cal X }}
\newcommand{\calY}{{\cal Y }}
\newcommand{\FF}{\mathbb{F}}
\newcommand{\ZZ}{\mathbb{Z}}

\newcommand{\im}[1]{{\mathrm{Im}{(#1)}}}

\newcommand{\bbox}[1]{\mathrm{Box}({#1})}

\newtheorem{dfn}{Definition}

\newtheorem{claim}{Claim}
\newtheorem{lemma}{Lemma}

\newtheorem{theorem}{Theorem}
\newtheorem*{theorem*}{Theorem}

\newtheorem*{hlf}{Hidden Linear Function Problem}
\newtheorem*{2dhlf}{2D Hidden Linear Function Problem}
\renewcommand\footnotemark{}


\begin{document}
\title{Quantum advantage with shallow circuits}
\author{Sergey Bravyi$^{1}$, David Gosset$^{1}$, Robert K{\"o}nig$^{2}\thanks{sbravyi@us.ibm.com, dngosset@us.ibm.com, robert.koenig@tum.de}$\\ \textit{$^{1}$IBM T.J. Watson Research Center}\\\textit{$^{2}$ Institute for Advanced Study \& Zentrum Mathematik,}\\\textit{ Technische Universit\"{a}t M\"unchen}}
\date{}
\maketitle

\abstract{We prove that constant-depth quantum circuits are more powerful than their classical counterparts. 
To this end we introduce a non-oracular version of the Bernstein-Vazirani problem which we call the 2D Hidden Linear Function problem. An instance of the problem is specified by a quadratic 
form $q$ that maps $n$-bit strings to integers modulo four. The goal is to identify a linear boolean function which describes the action of $q$ on a certain subset of $n$-bit strings. 
We prove that any classical probabilistic circuit composed of bounded fan-in gates that solves
the 2D Hidden Linear Function problem with high probability must have depth logarithmic in $n$. In contrast, we show that this problem can be solved with certainty by a constant-depth quantum circuit 
composed of one- and two-qubit gates acting locally on a two-dimensional grid.
}


\section{Introduction}

Parallel  algorithms lay the foundation for modern  high-performance computing.
Many problems of practical importance such as  
Monte Carlo simulation, 
computing the  rank, the determinant,  and the inverse of a matrix
lend themselves naturally to parallelization~\cite{csanky1976fast,borodin1982fast}.
Of particular interest to us are the problems that 
can be solved on a parallel machine in a  constant time independent of the problem size
using a polynomial number of  processors. 
The class of such problems, known as $\NC^0$,
captures the computational power of constant-depth circuits 
with bounded fan-in gates~\cite{arora2009computational}.

The success of classical parallel algorithms motivates the study of their quantum counterparts. 
Quantum parallel algorithms running in a constant time take as input a classical
bit string, apply a constant-depth quantum circuit composed of one- and two-qubit gates,
and output a random bit string   obtained by measuring each qubit in the $0,1$-basis.  
For brevity, we shall refer to such computations as Shallow Quantum Circuits (SQC). 

Although SQCs are a highly restricted form of quantum computation, there is hope that they may outperform classical computers in certain tasks. A pioneering work by Terhal and DiVincenzo~\cite{terhal2002adaptive} gave the first evidence in this direction by showing that SQCs may be hard to simulate classically.  Quite recently, Bermejo-Vega et al.~\cite{bermejo2017architectures}, building on earlier studies of so-called IQP circuits \cite{bremner2016average,bremner2016achieving,farhi2016quantum}, gave further evidence that the output distribution of SQCs may be hard to sample classically even if a constant statistical error is allowed;
see also Ref.~\cite{gao2016quantum}.  A more powerful model of computation consisting of logarithmic-depth quantum circuits assisted by polynomial-time classical computation is known to be capable of solving hard problems such as factoring~\cite{cleve2000fast}.

Parallelism and circuit depth are important considerations when designing quantum algorithms that can be executed in the near-future on small quantum computers that may lack error correction capabilities~\cite{temme2016error,li2016efficient,boixo2016characterizing}. While the overhead for encoding and manipulating quantum data fault-tolerantly is asymptotically small~\cite{gottesman2013overhead,bombin2015gauge}, it remains
prohibitive for current technology. A quantum computation without error correction can only compute for a constant
amount of time before the qubits decohere and the entropy builds up~\cite{pastawski2009long}.
In this situation one may wish to parallelize the computation as much as possible to fit within the coherence time.

What can we hope to prove concerning the computational power of SQCs~?
A rigorous proof that SQCs outperform polynomial-time classical algorithms for some computational task
is arguably beyond our reach (as it would imply a separation between the complexity classes
$\BQP$ and $\BPP$). In this paper we  set a less ambitious goal and pose the following question:
\newline
\newline
\textit{Can constant-depth quantum circuits solve a computational problem that constant-depth classical circuits cannot?}
\newline

Put differently, we ask whether constant-time parallel quantum algorithms are more powerful than their
classical probabilistic counterparts. 
We show  that the answer to the above question is YES, even if the
quantum circuit is composed of nearest-neighbor gates acting on a 2D grid
whereas the only restriction on the constant-depth classical (probabilistic) circuit is 
having a bounded fan-in. In particular, the gates in the classical circuit may be long-range (i.e., they need not be geometrically local in 2D or otherwise) and may have unbounded fan-out. 
We emphasize that our result constitutes a provable separation and does not rely on any conjectures or assumptions concerning complexity classes. Formally, our result implies that there is a search (relational) problem solved by SQCs but not by $\NC^{0}$ circuits, even if we allow the classical circuit access to random input bits drawn from an arbitrary distribution depending on the input size.

The computational problem that we use to establish the above separation
is a simple linear algebra task concerning quadratic forms over the binary field,
see Section~\ref{sec:hlf} for formal definitions. We call it the 2D Hidden Linear Function problem.  The input is a binary vector $b\in \{0,1\}^n$ and 
 a binary $n\times n$ matrix $A$. Here the rows and columns of $A$ are labeled by sites
of a 2D grid with $n$ sites, and $A$ is sparse in the sense that $A_{i,j}=0$ unless the sites
$i,j$ are nearest neighbors on the grid. The pair $A,b$ defines a 
quadratic form $q\, : \, \FF_2^n \to \ZZ_4$  such that
\[
q(x)=2\sum_{1\le \alpha< \beta\le n}  A_{\alpha,\beta}\, x_\alpha x_\beta + \sum_{\alpha=1}^n b_\alpha x_\alpha,
\]
where $x_1,\ldots,x_n\in \FF_2$ are binary variables. 
Define the set 
\begin{equation}
\calL_q=\{ x\in \FF_2^n \, : \,  q(x\oplus y)=q(x) + q(y) \quad \forall y\in \FF_2^n\}. 
\label{eq:sumq}
\end{equation}
Here $x\oplus y$ denotes addition of binary vectors modulo two, while $q(x)+q(y)$ in Eq.~\eqref{eq:sumq} is evaluated modulo four. We show that  the restriction of $q(x)$ onto $\calL_q$
is always a linear form, that is, there exists a vector $z\in \FF_2^n$
such that
$q(x)=2\sum_{\alpha=1}^n z_\alpha x_\alpha$ for all $x\in \calL_q$.
The problem is to find such a vector $z$ (which may be  non-unique).
This problem can be viewed as an non-oracular
version of the well-known Bernstein-Vazirani problem~\cite{BV}, where the goal is to learn 
a hidden linear function specified by an oracle. In our case there is no oracle and the linear function is hidden inside the quadratic form $q$ which is explicitly
specified by its coefficients $A,b$. 

Our results can be summarized as follows.
\begin{theorem*}
Any classical probabilistic circuit with bounded fan-in gates
that solves the 2D Hidden Linear Function problem with success probability greater than $7/8$ 
must have depth $\Omega(\log{n})$. In contrast the problem can be solved with certainty
by a constant-depth quantum circuit composed of one- and two-qubit gates acting locally on a 2D grid.
\end{theorem*}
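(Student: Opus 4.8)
The plan is to establish the two directions separately: a constant-depth quantum circuit that solves the problem with certainty, and an $\Omega(\log n)$ lower bound on the depth of any bounded-fan-in classical probabilistic circuit. For the quantum upper bound I would work with the stabilizer state
\[
|\psi_q\rangle = 2^{-n/2}\sum_{x\in\FF_2^n} i^{q(x)}\,|x\rangle,
\]
which encodes the quadratic form directly in its phases. Since $i^{q(x)} = \prod_{\alpha<\beta}(-1)^{A_{\alpha,\beta}x_\alpha x_\beta}\prod_\alpha i^{b_\alpha x_\alpha}$, this state is prepared from $|+\rangle^{\otimes n}$ by applying a controlled-$Z$ gate on each edge $\{\alpha,\beta\}$ with $A_{\alpha,\beta}=1$ and a phase gate $S=\mathrm{diag}(1,i)$ on each site with $b_\alpha=1$. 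Because $A$ is supported on nearest-neighbor pairs of the 2D grid, the interaction graph has bounded degree, so the controlled-$Z$ layer splits into a constant number of rounds of disjoint gates by edge-coloring, while the Hadamard and phase layers cost one round each. Measuring every qubit in the $X$ basis then completes a constant-depth, geometrically local circuit.

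It remains to argue that each measurement outcome is a valid solution. I would analyze the outcome distribution through the stabilizer group of $|\psi_q\rangle$, whose generators are the conjugates of the single-qubit $X$ operators under the Clifford circuit above; the linear constraints that the stabilizers which are products of Pauli $X$'s impose on the $X$-basis outcomes pin down a vector $z$ satisfying $q(x)=2\sum_\alpha z_\alpha x_\alpha$ for all $x\in\calL_q$. Equivalently, one evaluates the Gauss sum $\langle s|H^{\otimes n}|\psi_q\rangle = 2^{-n}\sum_x i^{q(x)}(-1)^{s\cdot x}$ and shows that every $s$ in its support, read through a fixed and easily computable map, is a hidden linear function of $q$. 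Using the structural fact established above, that $q$ restricted to $\calL_q$ is linear, this would show that the support of the output distribution is contained in the solution set, so the circuit answers correctly with probability one.

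For the classical lower bound the plan is a light-cone argument combined with a reduction to a nonlocal game. In a depth-$d$ circuit with fan-in at most $K$, each output bit depends on at most $K^d$ input bits, its backward light cone. I would design a specific hard instance on the grid in which solving the problem forces certain triples of output bits to satisfy a parity relation of Greenberger-Horne-Zeilinger (GHZ) type, where the inputs determining each relation and the three output sites are mutually far apart on the lattice. If $d = o(\log n)$ then $K^d = n^{o(1)}$, so the backward light cones of the three output bits in such a triple can be made pairwise disjoint on the input sites carrying the GHZ data; each of the three bits is therefore a function of a shared random string together with its own disjoint block of inputs, which is exactly a local hidden-variable strategy for the associated three-party game.

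The final step is the Bell/GHZ inequality: a local hidden-variable model wins the GHZ game with probability at most $3/4$, whereas a circuit solving the 2D Hidden Linear Function problem with success probability greater than $7/8$ would, after averaging over the many embedded gadgets, have to reproduce these correlations well enough to beat $3/4$ on a typical gadget, a contradiction; hence any such circuit must have depth $\Omega(\log n)$. The main obstacle, and the technical heart of the argument, is the construction of the hard instance so that (i) the correct answer genuinely encodes GHZ-type correlations among spatially separated sites, matching the graph-state correlations that the quantum circuit exploits, and (ii) the averaging over gadgets converts a global success probability above $7/8$ into a local violation of the $3/4$ bound, while correctly accounting for the unbounded fan-out and for the arbitrary input randomness permitted to the classical circuit (which, being shared, is precisely the hidden-variable resource and therefore cannot rescue the bound).
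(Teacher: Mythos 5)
Your quantum upper bound is essentially the paper's: preparing $2^{-n/2}\sum_x i^{q(x)}|x\rangle$ from $|+\rangle^{\otimes n}$ with one layer of $S$ gates and an edge-colored constant number of layers of $CZ$ gates, then measuring in the $X$ basis, is exactly the circuit $H^{\otimes n}U_qH^{\otimes n}|0^n\rangle$ of Fig.~\ref{fig:hlf}, and your Gauss-sum/stabilizer claim about its support is the content of Lemma~\ref{lem:pz} (which still has to be proved; the nontrivial step is the factorization $\Gamma(\FF_2^n,z)=\Gamma(\calL_q,z)\,\Gamma(\calK,z)$ with $|\Gamma(\calK,z)|$ independent of $z$). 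The genuine gaps are in the classical lower bound, and there are two. First, your hard instances cannot take the form you describe: since $A$ must be supported on grid edges, any instance that correlates three far-apart sites $u,v,w$ necessarily connects them by long paths (the paper uses an even-length cycle $\Gamma$ through $u,v,w$), and the stabilizer-derived constraints then involve parities of measurement outcomes over entire segments of those paths, with \emph{which} outputs enter the parity depending on the inputs --- the relevant identity is $i^{b_u+b_v+b_w}m_um_vm_wm_E\,m_R^{b_u}m_B^{b_v}m_L^{b_w}=1$, where $m_L,m_R,m_B,m_E$ are products of outcomes over sides of the cycle (Claim~\ref{claim:zclass}). There is no ``triple of output bits'' obeying a GHZ relation. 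Consequently the strategy induced by a shallow circuit is not a three-party local-hidden-variable strategy (the parity over the side joining $v$ and $w$ depends on both $b_v$ and $b_w$), so the standard $3/4$ GHZ bound cannot be invoked; what is needed is the analogous bound for this communication-assisted-type model, which is precisely the affine-function analysis of Claim~\ref{claim:affine} following Barrett et al. You flag this yourself as ``the main obstacle,'' but it is the technical heart, not a detail; also note the paper needs no averaging over many gadgets --- a single cycle with its $8$ settings of $b_ub_vb_w$ suffices, and the constant $7/8$ arises because at least one of those eight settings must fail, not from the $3/4$ game value.

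Second, your lightcone bookkeeping controls the wrong cones. Because the constraints involve many output bits spread along the whole cycle, what must be controlled is the \emph{forward} lightcone of the inputs $b_u,b_v,b_w$ --- the set of outputs they can influence --- and this is \emph{not} bounded by $K^d$, since fan-out is unbounded: a single input bit of a depth-$d$, fan-in-$K$ circuit can influence every output. Pairwise disjointness of the backward lightcones of a few chosen output bits therefore does not suffice, and one cannot even choose the cycle first and the vertices afterwards, since the set of lightcones to be avoided depends on which vertices are chosen. The paper resolves this with a counting argument (Eq.~\eqref{eq:lightconesize} implies that all but $O(N^{15/8})$ input bits have forward lightcone of size at most $N^{1/4}$, the ``good'' vertices of Eq.~\eqref{good}), followed by a two-stage union-bound construction: first pick good $u,v,w$ whose lightcones avoid the other two boxes $\bbox{u},\bbox{v},\bbox{w}$ (Claim~\ref{claim:triple}), and only then pick, among $\lfloor N^{1/2}\rfloor$ disjoint candidate paths, a cycle through $u,v,w$ avoiding those three lightcones outside the boxes (Claim~\ref{claim:gamma}). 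Without these two ingredients --- the affine/communication-assisted nonlocality bound and the good-vertex-then-cycle construction handling unbounded fan-out --- your proposal is a correct plan in spirit but not a proof; supplying them turns it into the paper's argument.
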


Our main technical contribution is the depth lower bound for classical circuits which solve the 2D Hidden Linear Function problem, see Section~\ref{sec:lowerbound}.
The proof exploits quantum nonlocality -- a form of correlation present in 
the measurement statistics of  entangled quantum states that cannot be reproduced by 
local hidden variable models~\cite{mermin,ghz} even if a limited amount of communication 
is allowed~\cite{barrett2007}. By leveraging the result of Ref.~\cite{barrett2007}
we prove that  a class of entangled quantum states known as cluster states (or 2D graph states) ~\cite{raussendorf2001one,briegel2001persistent}
exhibits a particularly strong form of  nonlocality that  cannot be reproduced by constant-depth classical circuits. At the same time, we show that such nonlocality is present in the input-output correlations of the 2D Hidden Linear Function problem.
We also prove that this problem  can be 
solved by a simple SQC that consists of two layers of single-qubit gates, and
a unitary operator $U_q$ such that $U_q|x\rangle =i^{q(x)}|x\rangle$. The latter can be easily implemented by a constant-depth circuit which takes as input the coefficients $A,b$ which specify $q$, see Section~\ref{sec:hlf}  for details.

Quantum algorithms for learning and testing  properties of   Boolean functions have a
long history~\cite{BV,deutsch1992rapid,atici2007quantum,aaronson2010bqp,gavinsky2011quantum,cross2015quantum}.
Most of these algorithms (including ours) work by sampling a probability distribution related to the 
Fourier transform of the function of interest. For example, 
an analogue of the Bernstein-Vazirani problem for quadratic forms over the
binary field has been previously considered by R{\"o}tteler~\cite{rotteler2009quantum}. That work describes a quantum algorithm that learns a quadratic form with $n$ binary variables
specified by an oracle using only $O(n)$ queries to the oracle. We note however that 
the  problem considered in Ref.~\cite{rotteler2009quantum} is not directly related to the 
one studied in the present paper. Our problem is more closely related to the Bernstein Vazirani problem \cite{BV} or,  more generally,  the Fourier Fishing problem introduced by Aaronson~\cite{aaronson2010bqp,Aaronson16}. Fourier Fishing is a search problem where the goal is to identify a bit sting $z$ such that the Fourier transform of a given  Boolean function has non-negligible weight on $z$. We shall see that a string $z$ is a solution of the 2D Hidden Linear Function problem for a quadratic form $q$ iff the Fourier transform of $q$ has a non-zero
weight on $z$, see Lemma~\ref{lem:pz} in Section~\ref{sec:hlf}.

To the best of our knowledge, quantum analogues of low-depth circuit classes $\NC$ and $\NC^k$
were first introduced by Moore and Nilsson~\cite{moore2001parallel}.
This work also provided techniques for parallelizing quantum circuits.
Hoyer and Spalek~\cite{hoyer2005quantum} showed that combining SQCs with the
so-called quantum fan-out gates (CNOTs with a single control and multiple target qubits) 
gives a class of quantum circuits that is powerful enough to solve the factoring
problem, if assisted by polynomial-time classical computation.
An alternative characterization of this class of circuits in terms of 
measurement-based computations was later given by Browne et al.~\cite{browne2010computational}.
Finally, Terhal and DiVincenzo~\cite{terhal2002adaptive} showed
that any quantum circuit composed of one- and two-qubit gates
can be compressed to a constant-depth  if the ability to post-select measurement outcomes is given. 
This implies that post-selective SQCs can solve any problem 
in the complexity class $\PostBQP=\PP$, see Ref.~\cite{aaronson2005quantum}.
On the negative side, Eldar and Harrow have
recently shown~\cite{eldar2015local}
 that SQCs are not capable of preparing 
certain entangled  states associated with good quantum codes
that  can be prepared by polynomial-size quantum circuits.
Furthermore, it is known that depth-$2$ SQCs can be efficiently simulated classically
in polynomial time~\cite{terhal2002adaptive}.
Superpolynomial-time classical algorithms for simulating more general SQCs are discussed in Ref.~\cite{Aaronson16}.

The remainder of the paper is organized as follows. We  begin in Section \ref{sec:prelim} by establishing some definitions and notation. In Section \ref{sec:hlf} we discuss computational problems where the goal is to find a hidden linear boolean function. In particular, we describe the 2D Hidden Linear Function problem and show that it is solved deterministically by SQCs. In Section \ref{sec:lowerbound} we prove that the 2D Hidden Linear Function problem cannot be solved by classical circuits of low depth. We conclude in Section \ref{sec:conc} with some open questions.


\section{Preliminaries \label{sec:prelim}}
Here we describe the classes of circuits considered in this paper and review the definition of quantum graph states.

\paragraph{Quantum circuits} A quantum circuit of depth $d$ consists of a sequence of $d$ layers of one- and two-qubit gates, such that  gates within a given layer act on disjoint sets of qubits. In other words, the unitary  implemented by the circuit is a product $U_d\ldots U_2U_1$ where each $U_j$ is a tensor product of one- and two-qubit gates which act nontrivially on disjoint sets of qubits. We shall use the Clifford+T gate 
set, that is, we assume that all single-qubit gates are either\footnote{Our definition of the $S$-gate differs from the standard one (which has $+i$ instead of $-i$).}
\[
H=\frac{1}{\sqrt{2}}\left( \begin{array}{cc}1 & 1\\1 & -1\end{array}\right), \qquad S=\left( \begin{array}{cc}1 & 0\\0 & -i\end{array}\right), \quad \text{or} \quad  T=\left( \begin{array}{cc}1 & 0\\0 & e^{i\pi/4}\end{array}\right)
\]
and all two-qubit gates are controlled-$Z$ gates $CZ=|0\rangle\langle 0|\otimes I+|1\rangle\langle 1|\otimes Z$. Here and below $X,Y,Z$ denote the single-qubit Pauli matrices,
\[
X=\left( \begin{array}{cc}0 & 1\\1 & 0\end{array}\right), \qquad
Y=\left( \begin{array}{cc}0 & -i\\i & 0\end{array}\right), \qquad
Z=\left( \begin{array}{cc}1 & 0\\0 & -1\end{array}\right).
\]

\paragraph{Classical circuits}
A classical circuit is specified by a directed acyclic graph in which each vertex is either an input (if it has in-degree $0$), an output (if it has out-degree $0$), or a gate. For each gate we must also specify a boolean function $\{0,1\}^k \to \{0,1\}$
which is computed by the gate, where $k$ is the in-degree or ``fan-in" of the gate. The output bit computed by a gate is copied to all outgoing edges of this gate. 
 The out-degree or ``fan-out" of a gate is the number of times the output of the gate is used in the remainder of the circuit. 
The depth $d$ of a classical circuit is the maximum number of gates along a path from an input to an output. 

Let $\mathcal{C}$ be a classical circuit with input $x\in \{0,1\}^m$ and output $z=F(x)\in \{0,1\}^n$. We will consider probabilistic circuits in which a subset of input bits may be chosen at random, that is $x=x'r$ where $r\in \{0,1\}^\ell$ is a random string drawn from some (arbitrary) distribution $\rho(r)$ and $x'\in \{0,1\}^{m-\ell}$. We shall often identify a variable $x_j$ or $z_k$ with its index $j$ or $k$ respectively.
\begin{dfn}
A pair of variables $x_j,z_k$ is said to be correlated iff there exists $y\in \{0,1\}^m$ such that flipping the $j$th bit of $y$ flips the $k$th bit of $F(y)$.
\end{dfn}
Note that $x_j,z_k$ can be correlated only if the graph describing the circuit contains a path from $x_j$ to $z_k$.
\begin{dfn}[Lightcone]
Let $\mathcal{C}$ be a classical circuit. For each input bit $x_j$ define its light cone $L_\mathcal{C}(x_j)$ to be the set of output variables correlated with $x_j$. Likewise, for each output bit $z_k$ define its light cone $L_{\mathcal{C}}(z_k)$ to be the set of input variables correlated with $z_k$. 
\end{dfn}
In this paper we are interested in circuits with constant depth such that all gates have fan-in at most $K=O(1)$. This class of circuits is known as $\NC^{0}$. Any such circuit computes a local function in the sense that each output bit can only be correlated with a constant number of input bits. Indeed, if $\mathcal{C}$ has depth $d$ then for all $i$ we have
\begin{equation}
|L_{\mathcal{C}}(z_i)|\leq K^d.
\label{eq:lightconesize}
\end{equation}

\paragraph{Graph states} Let $G=(V,E)$ be a finite simple graph. Define an associated $|V|$-qubit graph state $|\Phi_{G}\rangle$ by
\begin{equation}
|\Phi_{G}\rangle=\left(\prod_{\{i,j\}\in E} CZ_{ij}\right)H^{\otimes |V|}|0^{|V|}\rangle.
\label{eq:graphstatedefn}
\end{equation}
Here and below the subscript of a gate indicates the qubit(s) acted upon by this gate, 
and we use a shorthand notation $|0^m\rangle \equiv |0\rangle^{\otimes m}$.
The graph state $|\Phi_{G}\rangle$ is a stabilizer state with stabilizer group generated by 
\begin{equation}
g_v=X_v \left(\prod_{w: \{w,v\}\in E} Z_w\right) \qquad v\in V.
\label{eq:stabgroup}
\end{equation}
In other words $|\Phi_{G}\rangle$ is the unique state satisfying $g_v |\Phi_G\rangle=|\Phi_G\rangle$ for all $v\in V$.


\section{Hidden linear function problems \label{sec:hlf}}

In this section we consider computational problems where the goal is to find a hidden linear function.  

Our starting point is the well-known Bernstein-Vazirani problem \cite{BV}. Here one is given oracle access to a linear 
boolean function $\ell: \mathbb{F}_2^n\rightarrow \mathbb{F}_2$ parameterized by a ``secret" bit string $z\in \{0,1\}^n$ 
\[
\ell(x)=z^Tx \mod 2  \qquad x\in\{0,1\}^n.
\]
Here and below $z^Tx\equiv \sum_{\alpha=1}^n z_\alpha x_\alpha  $ denotes the inner product of vectors. 
Bernstein and Vazirani showed that one can identify the linear function $\ell$ (i.e., find the secret bit string $z$) using just one quantum query to an oracle $U_\ell$ which performs the unitary
\[
U_\ell|x\rangle=(-1)^{\ell(x)}|x\rangle.
\]
Indeed, we have
\[
|z\rangle=H^{\otimes n}U_\ell H^{\otimes n} |0^n\rangle.
\]
On the other hand a classical algorithm with access to a classical oracle computing $\ell$ requires $n$ queries to obtain $z$.

In the Bernstein-Vazirani problem the oracle hides the linear function. The quantum speedup obtained is \textit{relative to the oracle} and is not guaranteed to translate into a real-world quantum advantage.  Where else (other than inside an oracle) can we hide a linear function? 

We will now see how to hide a linear boolean function inside a $\mathbb{Z}_4$-valued quadratic form.  In particular, we consider quadratic forms $q\, : \, \FF_2^n \to \ZZ_4$ such that
\begin{equation}
\label{q}
q(x)=2\sum_{1\le \alpha< \beta\le n}  A_{\alpha,\beta}\, x_\alpha x_\beta + \sum_{\alpha=1}^n b_\alpha x_\alpha,
\end{equation}
where $x_1,\ldots,x_n\in \{0,1\}$ are binary variables\footnote{Here for simplicity we only consider $b_{\alpha}\in \{0,1\}$. Our results could alternatively be proved using a more general definition of quadratic forms where we allow $b_{\alpha}\in \{0,1,2,3\}$. },
\begin{equation}
\label{Ab}
A_{\alpha,\beta}\in \{0,1\} \quad \mbox{and} \quad  b_\alpha \in  \{0,1\}.
\end{equation}
In the remainder of this Section all arithmetic is performed in  the ring $\ZZ_4$ (unless stated otherwise). We shall label elements of $\ZZ_4$ by $\{0,1,2,3\}$. Entrywise addition of vectors modulo $2$ will be denoted
$\oplus$.  Define the set 
\begin{equation}
\label{Lq}
\calL_q=\{ x\in \FF_2^n \, : \,  q(x\oplus y)=q(x) + q(y) \quad \forall y\in \FF_2^n\}.
\end{equation}
Now let us see how $q$ hides a linear boolean function.
\begin{lemma}
The set $\calL_q$ is a linear subspace of $\FF_2^n$
and $q(x)\in \{0,2\}$ for all $x\in \calL_q$.
 The restriction of $q$ to $\calL_q$
is a linear function, that is, there exists a vector $z\in  \{0,1\}^n$ such that  $q(x)=2z^T x$
for all $x\in \calL_q$.
\end{lemma}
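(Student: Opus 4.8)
The plan is to reduce everything to a single symmetric bilinear form over $\FF_2$ obtained from the ``defect'' of $q$ from being additive. Concretely, I would define $\beta(x,y) = q(x\oplus y) - q(x) - q(y)$ (arithmetic mod $4$) and compute it explicitly. The key observation is that the quadratic part of $q$ carries a prefactor of $2$, so when expanding $(x\oplus y)_\alpha (x\oplus y)_\beta$ I may reduce each coordinate of $x\oplus y$ modulo $2$, replacing it by $x_\alpha + y_\alpha$; the linear part, having no prefactor of $2$, must instead be expanded using the integer identity $(x\oplus y)_\alpha = x_\alpha + y_\alpha - 2x_\alpha y_\alpha$, which contributes an extra diagonal term $-2\sum_\alpha b_\alpha x_\alpha y_\alpha$. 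Collecting terms and using $-2\equiv 2 \pmod 4$, I expect to obtain
\[
\beta(x,y) = 2\left( x^{T} M y \bmod 2\right),
\]
where $M$ is the symmetric matrix over $\FF_2$ whose off-diagonal entries are given by (the symmetrization of) $A$ and whose diagonal entries are $M_{\alpha\alpha}=b_\alpha$.

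Given this formula the three assertions follow quickly. Since $x\in\calL_q$ precisely when $\beta(x,y)=0$ for all $y$, and $x^{T}My = (Mx)^{T}y$ by symmetry of $M$, I get $\calL_q = \ker M$ over $\FF_2$, which is manifestly a linear subspace; this settles the first claim. For the second, I would specialize the defining relation to $y=x$: for $x\in\calL_q$ it gives $q(0) = q(x)+q(x)$, and since $q(0)=0$ this forces $2q(x)\equiv 0 \pmod 4$, i.e.\ $q(x)\in\{0,2\}$.

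For the linearity claim I would set $f(x)=q(x)/2\in\FF_2$ for $x\in\calL_q$, which is well-defined by the second claim. Taking any $x,x'\in\calL_q$ and using $x\in\calL_q$ with $y=x'$ gives $q(x\oplus x')=q(x)+q(x')$; dividing by $2$ yields $f(x\oplus x')=f(x)+f(x')\pmod 2$, where $f(x\oplus x')$ is defined because $\calL_q$ is a subspace. Hence $f$ is an $\FF_2$-linear functional on $\calL_q$, and any such functional extends to a linear functional on all of $\FF_2^n$, i.e.\ $f(x)=z^{T}x$ for some $z\in\{0,1\}^n$; then $q(x)=2z^{T}x$ for all $x\in\calL_q$, as claimed.

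The only genuinely delicate step is the mod-$4$ computation of $\beta(x,y)$: one must treat the quadratic and linear parts of $q$ differently, since reducing coordinates modulo $2$ is legitimate only where a factor of $2$ is present, and one must carefully track the diagonal contribution $-2b_\alpha x_\alpha y_\alpha$ arising from $(x\oplus y)_\alpha = x_\alpha + y_\alpha - 2x_\alpha y_\alpha$. Everything else is a routine consequence of the identification $\calL_q=\ker M$ together with the standard fact that a linear functional on a subspace of $\FF_2^n$ extends to the whole space.
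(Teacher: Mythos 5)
Your proof is correct, and the delicate step you flag---the mod-$4$ expansion of $q(x\oplus y)-q(x)-q(y)$---is handled properly: under the prefactor $2$ one may reduce $(x\oplus y)_\alpha$ modulo $2$, while the linear part needs the integer identity $(x\oplus y)_\alpha=x_\alpha+y_\alpha-2x_\alpha y_\alpha$, producing exactly the symmetric matrix $M$ you describe, with the symmetrization of $A$ off the diagonal and $M_{\alpha\alpha}=b_\alpha$. Where you differ from the paper is in the proof that $\calL_q$ is a subspace. The paper never computes anything: it shows closure under $\oplus$ purely formally, by the substitution chain $q(x\oplus x'\oplus y)=q(x)+q(x'\oplus y)=q(x)+q(x')+q(y)=q(x\oplus x')+q(y)$, which uses only the defining property of $\calL_q$ and would work verbatim for \emph{any} function $q:\FF_2^n\to\ZZ_4$ with $q(0)=0$, quadratic or not. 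Your route instead identifies $\calL_q=\ker M$ via the explicit bilinear defect form $\beta(x,y)=2\,(x^TMy \bmod 2)$; this is more work here, but it is not wasted work---it is essentially the computation the paper defers to Appendix A (there the defect form is called $J$, shown to equal $2\,x^TBy$ via the vanishing-third-derivative identity, and the fact $\calL_q=\ker B$ is what drives the proof of Claim~\ref{claim:partialfourier2}). So your approach buys an explicit description of $B$ (which the paper obtains only abstractly) and the kernel characterization up front, at the price of a longer proof of the lemma itself; the paper's argument is the more economical one if the lemma is all that is wanted. Your remaining two steps---setting $y=x$ to get $2q(x)=0$, and dividing the defining relation by $2$ to get an $\FF_2$-linear functional on $\calL_q$ that extends to $z^Tx$ on all of $\FF_2^n$---coincide with the paper's proof.
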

\begin{proof}
Suppose $x,x'\in \calL_q$. Choose any $y\in \FF_2^n$. Then 
\[
q(x\oplus x'\oplus y)=q(x)+q(x'\oplus y) = q(x) + q(x') + q(y) = q(x\oplus x') + q(y).
\]
This proves $x\oplus x'\in \calL_q$, that is, $\calL_q$ is a linear subspace.
Note that $0=q(0)=q(x\oplus x)=2q(x)$ for any $x\in \calL_q$,
that is, $q(x)\in \{0,2\}$.
Define a function $l\, : \, \calL_q \to \FF_2$ by
\[
l(x)=\left\{ \ba{rcl}
1 & \mbox{if} & q(x)=2,\\
0 & \mbox{if} & q(x)=0.\\
\ea
\right.
\]
From  Eq.~(\ref{Lq}) one infers that $l(x)$ is linear modulo two, 
\[
l(x\oplus y) = l(x)\oplus l(y) \qquad \mbox{for all $x,y\in \calL_q$}.
\]
It follows that $l(x)=z^T x {\pmod 2}$ for some vector $z\in \{0,1\}^n$.
Thus $q(x)=2z^T x$ for all $x\in \calL_q$.
\end{proof}
The linear action of $q$ on the subspace $\mathcal{L}_q$ can be parameterized by a secret bit string $z\in\{0,1\}^n$. In contrast with the Bernstein-Vazirani problem, here $z$ is not unique because the hidden linear function is only defined on a subspace of $\mathbb{F}_2^{n}$. To see this, let $\mathcal{L}_q^{\perp}$ be the orthogonal complement of $\mathcal{L}_q$. Then for any valid secret string $z$, and any $y\in \mathcal{L}_q^{\perp}$, $z\oplus y$ is also a valid secret string (since $2(z\oplus y)^T x=2z^Tx+2y^T x=2z^Tx$ for $x\in \mathcal{L}_q$). We define a search problem where the goal is to find a valid secret string.

\begin{mdframed}
\begin{hlf}
The input is a quadratic form $q\, : \, \FF_2^n \to \ZZ_4$ specified by 
a matrix $A$ and a vector $b$ as in Eqs.~(\ref{q},\ref{Ab}).
A solution is a binary vector $z\in \{0,1\}^n$ such that 
$q(x)=2z^T x$ for all $x\in \calL_q$.
\end{hlf}
\end{mdframed}

A quantum circuit which solves the Hidden Linear Function problem is shown in Fig.~\ref{fig:hlf}. Here there are two input registers holding $b$ and $A$ as well as an $n$-qubit data register. The circuit involves controlled gates which implement 
\begin{equation}
CZ(A)=\prod_{1\leq i<j\leq n} CZ_{ij}^{A_{ij}} \qquad S(b)=\bigotimes_{j=1}^{n} S_j^{b_j}
\end{equation}
on the data register conditioned on the first two registers holding bit strings $b,A$ respectively. The $n$-qubit unitary $U_q=S(b)CZ(A)$ satisfies
\begin{equation}
U_q |x\rangle = i^{q(x)} |x\rangle,  \qquad
x\in \{0,1\}^n.
\label{eq:Uq}
\end{equation}
The output $z\in \{0,1\}^n$ of the circuit is therefore drawn from the distribution
\begin{equation}
\label{eq:prob}
p(z)=|\langle z| H^{\otimes n} U_q H^{\otimes n} |0^n\rangle|^2.
\end{equation}
The following Lemma asserts that the circuit from Fig.~\ref{fig:hlf} solves the Hidden Linear Function problem deterministically (i.e., its output $z$ is a solution with probability $1$).
\begin{lemma}
$p(z)>0$ iff $z$ is a solution of the Hidden Linear Function problem, that is,
$q(x)=2z^T x$ for all $x\in \calL_q$. Furthermore, $p(z)$ is the uniform distribution on the
set of all solutions $z$.
\label{lem:pz}
\end{lemma}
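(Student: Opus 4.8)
The plan is to evaluate the amplitude in Eq.~\eqref{eq:prob} directly as a Gauss sum and to show that its modulus squared is proportional to an indicator that $z$ is a solution. Expanding the Hadamards and using Eq.~\eqref{eq:Uq}, one obtains
\[
\langle z|H^{\otimes n}U_qH^{\otimes n}|0^n\rangle=\frac{1}{2^n}\sum_{x\in\FF_2^n} i^{q(x)}(-1)^{z^Tx},
\]
so that $p(z)=4^{-n}|S(z)|^2$ with $S(z)=\sum_{x} i^{q(x)-2z^Tx}$, using $(-1)^{z^Tx}=i^{-2z^Tx}$ in $\ZZ_4$. Writing $f(x)=q(x)-2z^Tx$, the useful observation is that $f$ is again a quadratic form of the type~\eqref{q}, now with shifted (possibly non-Boolean) linear coefficients as permitted by the footnote, and that it shares the same associated bilinear form as $q$.

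The central step is to establish the polarization identity $q(x\oplus y)=q(x)+q(y)+2B(x,y) \pmod 4$, where $B\colon\FF_2^n\times\FF_2^n\to\FF_2$ is the symmetric bilinear form $B(x,y)=\sum_{\alpha<\beta}A_{\alpha\beta}(x_\alpha y_\beta+x_\beta y_\alpha)+\sum_\alpha b_\alpha x_\alpha y_\alpha$. I would derive this by substituting $(x\oplus y)_\alpha=x_\alpha+y_\alpha-2x_\alpha y_\alpha$ into Eq.~\eqref{q} and reducing modulo $4$. Comparing with Eq.~\eqref{Lq} then identifies $\calL_q$ with the radical $\{x:B(x,y)=0\ \forall y\}$ of $B$. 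The same identity holds verbatim for $f$ in place of $q$, since the extra linear shift contributes only a multiple of $4$ to the polarization.

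Armed with this, I would compute $|S(z)|^2=\sum_{x,y}i^{f(x)-f(y)}$ and substitute $x=y\oplus u$. Polarization gives $f(y\oplus u)-f(y)=f(u)+2B(u,y)$, so the double sum factorizes as $|S(z)|^2=\sum_u i^{f(u)}\sum_y(-1)^{B(u,y)}$. For fixed $u$ the inner character sum equals $2^n$ when $u\in\calL_q$ and vanishes otherwise, leaving $|S(z)|^2=2^n\sum_{u\in\calL_q}i^{f(u)}$. On $\calL_q$ the previous Lemma furnishes a fixed solution $\zeta$ with $q(u)=2\zeta^Tu$, whence $f(u)=2(\zeta\oplus z)^Tu \pmod 4$ and $i^{f(u)}=(-1)^{(\zeta\oplus z)^Tu}$. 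The remaining sum over the subspace $\calL_q$ is again a character sum, equal to $|\calL_q|$ exactly when $\zeta\oplus z\in\calL_q^\perp$ — equivalently when $z$ is itself a solution — and $0$ otherwise.

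Collecting these, $p(z)=2^{\dim\calL_q-n}$ for every solution $z$ and $p(z)=0$ otherwise, which proves the first claim. For the normalization, the solution set is the coset $\zeta\oplus\calL_q^\perp$ of size $2^{\,n-\dim\calL_q}$, so the total probability is $2^{\dim\calL_q-n}\cdot 2^{\,n-\dim\calL_q}=1$ and $p$ is uniform on solutions. I expect the main obstacle to be the careful modulo-$4$ bookkeeping in the polarization identity — getting the Boolean cross terms and the reductions $-2\equiv 2$ exactly right — after which the Gauss-sum evaluation and the two successive character-sum collapses are routine.
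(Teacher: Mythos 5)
Your proof is correct, and it takes a genuinely different route from the paper's. The paper evaluates the amplitude itself: it fixes a complementary subspace $\calK$ with $\FF_2^n=\calL_q+\calK$, $\calL_q\cap\calK=0$, factorizes $\Gamma(\FF_2^n,z)=\Gamma(\calL_q,z)\cdot\Gamma(\calK,z)$, evaluates $\Gamma(\calL_q,z)$ exactly via Lemma~1 (this is the paper's Claim~\ref{claim:partialfourier1}, identical in substance to your final character-sum collapse with the reference solution $\zeta$), and then needs a separate, more involved duality argument (Appendix~\ref{app:correctness}) to show $|\Gamma(\calK,z)|^2$ is independent of $z$, with the constant ultimately fixed by the normalization $\sum_z p(z)=1$. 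You instead square the Gauss sum: writing $|S(z)|^2=\sum_{x,y}i^{f(x)-f(y)}$ and substituting $x=y\oplus u$, the polarization identity $f(y\oplus u)-f(y)=f(u)+2B(u,y)$ makes the inner sum over $y$ a character sum that vanishes unless $u$ lies in the radical of $B$, which is exactly $\calL_q$. This standard Weyl-sum trick eliminates the complementary subspace and the entire appendix argument, and it yields the explicit value $p(z)=|\calL_q|/2^n$ on solutions, so uniformity and normalization follow by directly counting the solution coset $\zeta\oplus\calL_q^\perp$ rather than by determining an unknown constant. The two proofs share the same core ingredients---the bilinear form $B$ (which the paper derives abstractly in the appendix via the discrete third-derivative identity, and you derive by explicit substitution of $(x\oplus y)_\alpha=x_\alpha+y_\alpha-2x_\alpha y_\alpha$), the identification $\calL_q=\ker(B)$, and the reference solution from Lemma~1---but your organization is more self-contained and elementary; the paper's amplitude-level factorization isolates where the solution condition enters at the cost of the extra linear algebra needed to control $\Gamma(\calK,z)$.
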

\begin{proof}
For any linear subspace $\calL\subseteq \FF_2^n$ and a vector $z\in \FF_2^n$ define  
a partial Fourier transform
\[
\Gamma(\calL,z)\equiv \sum_{x\in \calL} (-1)^{z^T x} \cdot i^{q(x)}.
\]
Then 
\begin{equation}
\label{GG}
p(z)=\frac1{4^n} \left| \Gamma(\FF_2^n,z)\right|^2.
\end{equation}
Choose any linear subspace $\calK\subseteq \FF_2^n$ such that 
\begin{equation}
\label{LK}
\FF_2^n = \calL_q  + \calK  \quad \mbox{and} \quad \calL_q \cap \calK=0.
\end{equation}
From Eq.~(\ref{Lq}) one infers that 
\begin{equation}
\label{GG1}
\Gamma(\FF_2^n, z)=\Gamma(\calL_q, z) \cdot \Gamma(\calK,z).
\end{equation}
\begin{claim}
$\Gamma(\calL_q,z)=|\calL_q|$ if   $z$ is a solution of the Hidden Linear Function Problem
and $\Gamma(\calL_q,z)=0$ otherwise. The number of solutions to the Hidden Linear Function Problem is $|\mathcal{L}_q^{\perp}|$.
\label{claim:partialfourier1}
\end{claim}
\begin{proof}
By Lemma~1 there exists a vector $y\in \FF_2^n$ such that $q(x)=2y^T x$ for all
$x\in \calL_q$. Then $i^{q(x)}=(-1)^{y^T x}$ and thus 
\[
\Gamma(\calL_q,z)=\sum_{x\in \calL_q} (-1)^{x^T(y\oplus z)}=\left\{
\ba{rcl}
|\calL_q| &\mbox{if} & y\oplus z\in \calL_q^\perp,\\
0 && \mbox{otherwise}. \\
\ea\right.
\]
Note that the first case,  $y\oplus z\in \calL_q^\perp$, occurs iff $z$ is a solution of 
the Hidden Linear Function Problem, since $y$ and $z$ have the same binary inner product with any vector
from $\calL_q$ iff $y\oplus z\in \calL_q^\perp$. Therefore the number of solutions is $|\mathcal{L}_q^{\perp}|$.
\end{proof}
\begin{claim}
$|\Gamma(\calK,z)|^2=2^{n}\cdot |\calL_q|^{-1}$ for all $z\in \FF_2^n$.  
\label{claim:partialfourier2}
\end{claim}
We provide a proof of Claim \ref{claim:partialfourier2} in Appendix \ref{app:correctness}. The statement of the lemma follows directly from Eqs.~(\ref{GG},\ref{GG1}) and Claims~\ref{claim:partialfourier1},\ref{claim:partialfourier2}.
\end{proof}
A simple corollary of Lemma \ref{lem:pz} is that the Hidden Linear Function problem can be solved in polynomial-time using a classical computer. Indeed, for a given input $(A,b)$ the circuit from Fig.~\ref{fig:hlf} applies a sequence of Clifford gates to the data register followed by measurement in the computational basis. A solution can be efficiently computed classically by 
simulating this circuit using the Gottesman-Knill Theorem (see for example Ref.~\cite{aaronson2004improved}).

Since $p(z)$ is proportional to the absolute value squared of the
Fourier transform of $i^{q(x)}$, see Eq.~(\ref{eq:prob}), we see that the Hidden Linear Function problem is a variant of the Fourier Fishing problem defined in Section~6.2 of Ref.~\cite{Aaronson16}. One difference is that instead of Boolean functions we consider $\ZZ_4$-valued functions. A second important difference is that we consider an explicit computational problem (as opposed to an oracular one).

\begin{figure}
\centering
\begin{tikzpicture}[scale=1]
\draw (0,0.5)--(1,0.5);
\draw (1,0) rectangle (2,1);
\draw (2,0.5)--(2.7,0.5);
\draw (2.7,0) rectangle (4.3,1);
\draw (4.3,0.5)--(5,0.5);
\draw (5,0) rectangle (6,1);
\draw (6,0.5)--(7,0.5);
\draw (7,0) rectangle (8,1);
\draw (8,0.5)--(9,0.5);

\draw (9,0) rectangle (10,1);
\draw[thick] (9.08,0.4) arc (160:20:0.45);
\draw[thick, ->](9.5,0.5)--(9.8,0.8);

\node at (1.5,0.5){$H^{\otimes n}$};
\node at (7.5,0.5){$H^{\otimes n}$};
\node at (3.5,0.5){$CZ(A)$};
\node at (5.5,0.5){$S(b)$};

\node at (-0.3,0.5) {$|0^n\rangle$};
\node at (-0.3,1.5) {$|A\rangle$};
\node at (-0.3,2.5) {$|b\rangle$};

\node at (9.3,1.5) {$|A\rangle$};
\node at (9.3,2.5) {$|b\rangle$};

\draw (0,1.5)--(9,1.5);
\draw (0,2.5)--(9,2.5);

\draw (3.5,1.5)--(3.5,1);
\node[circle, fill=black, inner sep=2pt] at (3.5,1.5){};

\draw (5.5,2.5)--(5.5,1);
\node[circle, fill=black, inner sep=2pt] at (5.5,2.5){};

\draw [thick,dashed] (2.5,-0.5) rectangle (6.2,3);
\draw [decorate,decoration={brace,amplitude=10pt}]  (2.5,3.2)--(6.2,3.2);
\node at (4.4,3.8){$U_q$};
\end{tikzpicture}
\caption{A quantum circuit which solves the Hidden Linear Function problem. Here $CZ(A)=\prod_{1\leq i<j\leq n} CZ_{ij}^{A_{ij}}$ and $S(b)=\bigotimes_{j=1}^{n} S_j^{b_j}$.  \label{fig:hlf}}
\end{figure}
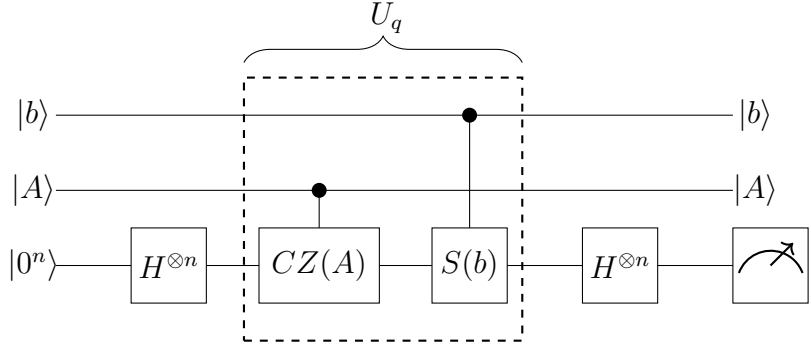

In general the circuit from Fig.~\ref{fig:hlf} may not be expressible as a constant-depth quantum circuit. We now restrict our attention to a subset of instances where $A$ has a certain 2D structure. We will see that for such instances the circuit from  Fig.~\ref{fig:hlf} can be decomposed as a constant-depth quantum circuit over the Clifford+T gate set.

In particular, let $G=(V,E)$ be the $N\times N$ grid graph. If we label vertices of $G$ by horizontal and vertical coordinates
\[
V=\left\{(i,j): 1\leq i,j\leq N\right\}
\]
then it has vertical edges $e=\{v,w\}\in E$ between all pairs $v=(i,j),w=(i,j+1)$ and horizontal edges between all pairs $v=(i,j), w=(i+1,j)$. Then $|V|=N^2$ and $|E|=2N(N-1)$. Let us consider quadratic forms Eq.~(\ref{q}) where the matrix $A$ specifies a subgraph of $G$. More precisely, we assume that $A$ is an $N^2\times N^2$ matrix whose rows and columns are labeled by vertices of $G$. We require that $A_{v,w}=0$ unless the pair $\{v,w\}$ is an edge of $G$.
We shall parameterize such a matrix by $|E|$ binary variables $A_e\in \{0,1\}$ such that $A_e=A_{v,w}=1$ iff $e=\{v,w\}\in E$. 
Now we can rewrite Eq.~(\ref{q}) as 
\begin{equation}
\label{q2D}
q(x)=2\sum_{e=\{v,w\}\in E}  A_e x_{v} x_w + \sum_{v\in V} b_v x_v, \qquad A\in \{0,1\}^{|E|} \quad b\in \{0,1\}^{|V|}.
\end{equation}
\begin{mdframed}
\begin{2dhlf}
Let $G=(V,E)$ be the $N\times N$ grid. The input is a $\ZZ_4$-valued quadratic form $q$ specified by vectors $A$ and $b$ as in Eq.~(\ref{q2D}). A solution is a binary vector $z\in \{0,1\}^{|V|}$ such that 
$q(x)=2z^T x$ for all $x\in \calL_q$.
\end{2dhlf}
\end{mdframed}
We shall say that the above describes an instance of size $N$. The number of input bits in an instance of size $N$ 
is then $|V|+|E|=3N^2 -2N$. 
The number of output bits is $n=|V|=N^2$.

The quantum algorithm from Fig.~\ref{fig:hlf} solves the 2D Hidden Linear Function problem and can be implemented in constant depth:
\begin{theorem}
For each $N\geq 2$ there exists a quantum circuit $\mathcal{Q}_N$ of depth $d=O(1)$ which deterministically solves size-$N$ instances of the 2D Hidden Linear Function problem.
\end{theorem}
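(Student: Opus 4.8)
The plan is to recognize the statement as a geometric-locality refinement of Lemma~\ref{lem:pz}. That lemma already establishes that the circuit of Fig.~\ref{fig:hlf}, which prepares $H^{\otimes n} U_q H^{\otimes n}|0^n\rangle$ and measures in the computational basis, outputs a solution of the (general) Hidden Linear Function problem with probability $1$. Since every size-$N$ 2D instance is in particular an instance of the general problem, it suffices to exhibit a constant-depth realization of this circuit using one- and two-qubit gates that are geometrically local on the $N\times N$ grid. I would place one data qubit at each vertex $v\in V$ and read off the layered structure $H^{\otimes n}\,|\,CZ(A)\,|\,S(b)\,|\,H^{\otimes n}$ directly from $U_q=S(b)CZ(A)$ and Eq.~\eqref{eq:prob} (read left to right as time order).

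First I would dispose of the easy layers: $H^{\otimes n}$ and $S(b)=\bigotimes_{v} S_v^{b_v}$ are tensor products of single-qubit gates and hence each constitute a single depth-$1$ layer, with the $S$ gate at vertex $v$ present iff the classical input bit $b_v$ equals $1$; all of these lie in our single-qubit Clifford set. The only nontrivial layer is $CZ(A)=\prod_{e=\{v,w\}\in E} CZ_{vw}^{A_e}$, a product of $CZ$ gates supported on edges of the grid. Because $CZ$ gates acting on disjoint qubit pairs commute and can be applied simultaneously, the depth of $CZ(A)$ is at most the number of matchings needed to cover $E$, i.e.\ the edge-chromatic number of the grid graph.

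The key step, and the crux of the argument, is therefore to bound this edge-chromatic number by a constant. The $N\times N$ grid is bipartite and has maximum degree $4$, so by K{\"o}nig's edge-coloring theorem its edges split into exactly four matchings; explicitly, one may take the horizontal edges with even column index, the horizontal edges with odd column index, and the analogous two classes of vertical edges. Each matching consists of $CZ$ gates on disjoint nearest-neighbor pairs, so it forms one layer of geometrically local two-qubit gates (the gate on edge $e$ being present iff $A_e=1$), and $CZ(A)$ has depth at most $4$. Combining with the single-qubit layers yields a circuit $\mathcal{Q}_N$ of depth at most $7=O(1)$, composed entirely of one- and two-qubit Clifford+T gates acting locally on the grid, whose measurement output is a solution with certainty by Lemma~\ref{lem:pz}. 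I expect the only genuine obstacle to be this constant-depth compilation of $CZ(A)$: it is precisely the bounded edge-chromatic number of the 2D grid that fails for a general dense $A$, which is why the constant-depth construction is available exactly for the geometrically local instances.
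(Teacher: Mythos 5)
Your compilation of the unitary part is sound and coincides with the paper's: both arguments rest on splitting the grid edges into four matchings so that $CZ(A)$ fits into four layers of disjoint two-qubit gates, sandwiched between single-qubit layers. The genuine gap is in how your circuit consumes the instance. You make each gate ``present iff the corresponding classical input bit equals $1$,'' so the gate list of your circuit depends on $(A,b)$: what you have built is a family $\{\mathcal{Q}_{N,A,b}\}$ of circuits, one per instance, rather than a single circuit $\mathcal{Q}_N$ of the kind the theorem asserts. Read that way the statement is vacuous --- for any \emph{fixed} instance a solution $z$ exists (the problem is classically easy by Gottesman--Knill), so a depth-one circuit with hard-coded output already ``solves'' it --- and, more importantly, it breaks the symmetry with the classical lower bound of Theorem~\ref{thm:lowerbound}, which concerns circuits that receive $A,b$ as input and must handle all size-$N$ instances simultaneously. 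For the separation to mean anything, the quantum circuit must likewise take $A,b$ as input.

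The paper's proof closes exactly this gap: the instance is held in computational-basis input registers $|A\rangle,|b\rangle$ (Fig.~\ref{fig:hlf}), and the instance-dependence is realized unitarily by controlled gates --- each $S^{b_v}$ becomes a two-qubit controlled-$S$ gate $CS$ acting on the input qubit holding $b_v$ and the data qubit at $v$, and each $CZ^{A_e}$ becomes a three-qubit $CCZ$ gate as in Eq.~\eqref{eq:CCZ}. The four-matching decomposition is then applied to the $CCZ$ gates, and each $CS$ and $CCZ$ is expanded as a constant-size circuit of one- and two-qubit Clifford+T gates; this is also where the $T$ gates enter, so your observation that only Clifford gates are needed is an artifact of the classically-controlled model (the paper's footnote makes the same remark about physical implementations). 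Your argument is repairable along precisely these lines --- each input bit controls exactly one gate, so promoting classical controls to quantum controls costs only constant depth and preserves geometric locality once the input qubits are placed on the grid as the paper describes --- but as written, the controlled-gate implementation, which is the step that makes the theorem nontrivial and is essentially the entire content of the paper's proof, is missing; correctness via Lemma~\ref{lem:pz} is the part you did carry over intact.
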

\begin{proof}
It suffices to show that the controlled-$S(b)$ and controlled-$CZ(A)$ gates in Fig.~\ref{fig:hlf} can be expressed as constant-depth quantum circuits composed of one- and two-qubit gates over the Clifford+T gate set. 

The controlled-$S(b)$ gate can be implemented by applying a layer of two-qubit 
controlled-$S$ gates
$CS=|0\rangle\langle0|\otimes I+|1\rangle\langle 1|\otimes S$ acting on disjoint pairs of qubits. 
The gate $CS$ can be expressed exactly (with constant depth, of course) over the Clifford+T gate set
(see for example Ref.~\cite{amy2013meet}), which shows that the controlled-$S(b)$ gate from
Fig.~\ref{fig:hlf} can be implemented in constant depth. 

The controlled-$CZ(A)$ gate can be written as a product of three-qubit controlled-controlled-Z (CCZ) gates:
\begin{equation}
\label{eq:CCZ}
\prod_{e=\{v,w\}\in E} I\otimes \bigg(|0\rangle \langle 0|_e\otimes I +|1\rangle \langle 1|_e \otimes CZ_{v,w}\bigg)
\end{equation}
(here the three registers are as in Fig.\ref{fig:hlf}--the first one holds $b$, the second one holds $A$ and the third one is the data register). We can partition the edges of the 2D grid into four disjoint layers $E=E_1\cup E_2\cup E_3 \cup E_4$ such that no edges within a given layer share a vertex. Thus Eq.~\eqref{eq:CCZ} can be written as a depth-$4$ circuit composed of $CCZ$ gates.  Each $CCZ$ gate can then be decomposed exactly as a product of (a constant number of) one- and two-qubit gates from the Clifford+T gate set~\cite{amy2013meet}. 
\end{proof}

The circuit of Fig.~\ref{fig:hlf} can be embedded into a two-dimensional grid of size $N\times N$
such that  each vertex  of the grid contains $O(1)$ qubits 
and all  two-qubit gates are geometrically local. Indeed, the circuit contains $n=|V|$ target qubits (the bottom register initialized in the $|0^n\rangle$
state) and $|V|+|E|$ control qubits (the top and the center registers initialized in the $|b\rangle$ and $|A\rangle$ states). 
We shall place the target qubit labeled by a vertex $v$
and the control qubit that holds $b_v$ at the vertex $v$ of the grid.
Place the control qubit that holds $A_e$ with a horizontal (resp. vertical) edge $e$
at the left (resp. bottom) endpoint of the edge $e$. Now each vertex contains at most four qubits
and each two-qubit gate couples qubits  located at the same vertex or a pair of 
nearest-neighbor vertices. Thus the 2D Hidden Linear Function problem can be solved by 
a constant-depth quantum circuit with geometrically local gates\footnote{We note that a physical
implementation of the circuit from Fig.~\ref{fig:hlf} would require
only $n=N^2$ qubits and use only (classically controlled) Clifford gates
$H, S,CZ$ with  nearest-neighbor $CZ$ gates.}. We now show that even this highly restricted class of SQCs is more powerful than classical constant-depth circuits. 

\begin{theorem}
The following holds for all sufficiently large $N$. Let $\mathcal{C}_N$ be a classical probabilistic circuit with fan-in at most $K$ which solves all size-$N$ instances of the 2D Hidden Linear Function problem with probability greater than $7/8$. Then the depth of $\mathcal{C}_N$ is at least 
\[
\frac{1}{8}\frac{\log(N)}{\log(K)}.
\]
\label{thm:lowerbound}
\end{theorem}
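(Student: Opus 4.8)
The plan is to show that a low-depth classical circuit solving the 2D Hidden Linear Function problem would furnish a \emph{local} classical simulation of the measurement statistics of a two-dimensional graph (cluster) state, which is ruled out by the nonlocality results of Refs.~\cite{barrett2007,mermin,ghz}. The bridge between the two is Lemma~\ref{lem:pz}: for a fixed grid subgraph $G$ encoded by $A$, the set of valid solutions $z$ is precisely the set of outcomes obtained by measuring the graph state $|\Phi_G\rangle$ in the $X$- or $Y$-basis at each vertex $v$, with the basis selected by the input bit $b_v$. Concretely, $H^{\otimes n}U_q H^{\otimes n}|0^n\rangle$ is the graph state $|\Phi_G\rangle$ acted upon by the local phase gates $S(b)$ and a final layer of Hadamards, so the output distribution $p(z)$ of Eq.~\eqref{eq:prob} is exactly that of such a Pauli measurement. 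The stabilizer generators $g_v$ of Eq.~\eqref{eq:stabgroup} then translate, for each choice of $b$, into a family of linear parity constraints that every valid $z$ must satisfy, and these constraints depend on the inputs $b$.

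First I would record the crucial locality property of the hypothesized circuit $\mathcal{C}_N$. For any fixing of the random string $r$, each output bit $z_v=F(x)_v$ is a deterministic function whose light cone satisfies $|L_{\mathcal{C}_N}(z_v)|\le K^d$ by Eq.~\eqref{eq:lightconesize}; hence $z_v$ can depend only on the inputs $A_e,b_w$ lying in a bounded neighbourhood of $v$. Averaging over $r\sim\rho$, the circuit is therefore a local-hidden-variable model with shared randomness $r$ in which the outcome at $v$ is computed from $r$ together with the measurement settings inside a region of radius $O(K^d)$ around $v$. This is exactly the regime to which the communication-bounded nonlocality bound of Ref.~\cite{barrett2007} applies.

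Next I would construct the hard instance family. Fixing a hard subgraph (for instance a suitable induced path or ``brickwork'' pattern embedded in the grid) and placing a distribution on the basis bits $b$, I would exhibit a single output parity $\bigoplus_{v\in S}z_v$ over a set $S$ of large diameter whose correct value --- dictated by an appropriate product of the stabilizer generators $g_v$ that acts as a tensor product of $X$ and $Y$ operators compatible with the bases chosen by $b$ --- flips when a \emph{single} remote input bit $b_w$ is flipped. If $K^d$ is smaller than the distance from $w$ to the outputs in $S$, then no output in $S$ can detect the flip of $b_w$, so the circuit returns the same value of $\bigoplus_{v\in S}z_v$ on the inputs $b$ and $b\oplus e_w$ and must therefore err on at least one of them. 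Spreading many such independent constraints across the grid and averaging over a random $b$, a Mermin/GHZ-style counting argument (quantitatively the bound of Ref.~\cite{barrett2007}) forces the overall error probability above $1/8$ whenever the light cones are too small, i.e.\ whenever $K^d<N^{1/8}$. Rearranging $K^d\ge N^{1/8}$ gives $d\ge \tfrac{1}{8}\log N/\log K$, as claimed; the explicit constant $1/8$ emerges from matching the $7/8$ success threshold against the nonlocality gap in the counting step.

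I expect the main obstacle to be the third step: making the informal passage ``bounded light cone $\Rightarrow$ local hidden variable model'' precise in a form to which Ref.~\cite{barrett2007} directly applies, and then extracting a \emph{constant} lower bound on the error probability (strong enough to beat $7/8$) rather than a vanishing one. This requires choosing the subgraph and the distribution on $b$ so that the cluster-state nonlocality is distributed across many disjoint regions, each contributing an independent chance of failure, and carefully bookkeeping how a light cone of size $K^d$ limits the number of parity constraints the circuit can simultaneously satisfy. By contrast, handling the arbitrary shared-randomness distribution $\rho(r)$ should be routine: the light-cone bound holds for every fixed $r$, so the failure probability can be bounded pointwise in $r$ and then averaged.
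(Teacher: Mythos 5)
Your overall strategy --- reduce to graph-state nonlocality \`a la Barrett et al.\ and play it against the lightcone bound of Eq.~\eqref{eq:lightconesize} --- is the same one the paper uses, but your execution has a genuine gap at exactly the point you flag as delicate. The step ``$|L_{\mathcal{C}_N}(z_v)|\le K^d$, \emph{hence} $z_v$ can depend only on the inputs $A_e,b_w$ lying in a bounded neighbourhood of $v$'' is a non sequitur: Eq.~\eqref{eq:lightconesize} bounds only the \emph{cardinality} of a lightcone, not its geometric extent. The theorem allows the classical circuit to use arbitrary long-range gates, so $L_{\mathcal{C}}(z_v)$ may consist of $K^d$ input bits scattered anywhere over the grid, and symmetrically the lightcone of an input bit $b_w$ may contain output bits arbitrarily far from $w$. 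Consequently your later inference ``if $K^d$ is smaller than the distance from $w$ to the outputs in $S$, then no output in $S$ can detect the flip of $b_w$'' is unjustified: $b_w$ can sit in the lightcone of every output bit in $S$ no matter how far away $S$ is. Overcoming this is the main technical content of the paper's proof: it first shows (via a counting argument using Eq.~\eqref{eq:Kd}) that each region contains $\Omega(N^2)$ ``good'' input vertices whose lightcones have size at most $N^{1/4}$, then picks marked vertices $u,v,w$ at random subject to Eqs.~(\ref{c1},\ref{c2},\ref{c3},\ref{c4}) (Claim~\ref{claim:triple}), and finally picks a connecting cycle $\Gamma$ from among $\lfloor N^{1/2}\rfloor$ disjoint candidate paths per pair of boxes (Claim~\ref{claim:gamma}), so that by union bounds the lightcones of $b_u,b_v,b_w$ avoid all of $\Gamma$ outside small boxes around $u,v,w$. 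Only after this selection can the 1D nonlocality statement (Lemma~\ref{lem:triangle}) be applied to the restricted circuit. Your proposal contains no substitute for this step, and without it the passage to a ``communication-bounded local hidden variable model'' does not go through.

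Two smaller points. First, the mechanism you describe --- a single output parity whose required value flips when one remote input bit is flipped, so the circuit errs on one of two inputs --- is not how the contradiction actually works: flipping a single $b_w$ changes the parity of $b$, and the GHZ-type constraint Eq.~\eqref{eq:zprod} applies only when $b_u\oplus b_v\oplus b_w=0$. The real argument needs the four even-weight settings jointly (Claims~\ref{claim:zclass} and \ref{claim:affine}) and yields, for \emph{every} fixed value of the shared randomness $r$, failure on at least one of the $8$ settings; this pointwise-in-$r$ counting, not an amplification over ``many independent constraints spread across the grid,'' is precisely where the $7/8$ threshold comes from, and a single cycle suffices. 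Second, a ``suitable induced path or brickwork pattern'' would not serve as the hard instance: the stabilizer identities underlying Claim~\ref{claim:zclass} require an even-length cycle through three marked vertices with pairwise even distances, which is why the paper's construction routes a cycle (not a path) through $u$, $v$, $w$.
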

Here the classical circuit $\mathcal{C}_N$ takes input $A,b$ along with a random string $r$ drawn from some (arbitrary) probability distribution and its output $z\in \{0,1\}^{N^2}$ must be a solution to the given instance with probability greater than $7/8$. We prove the Theorem in the next Section.


\section{Nonlocality thwarts shallow classical circuits \label{sec:lowerbound}}
Constant-depth classical circuits with gates of bounded fan-in exhibit a kind of locality expressed by Eq.~\eqref{eq:lightconesize}. On the other hand it is well known that the correlations present in the measurement statistics of entangled quantum states exhibit \textit{quantum nonlocality}--that is, they cannot be reproduced by completely local functions in which every output bit depends only on a single associated input bit as well as a shared random string. In this Section we provide a proof of Theorem \ref{thm:lowerbound} which exploits this tension. We will see how quantum nonlocality--even in states generated by constant-depth quantum circuits-- thwarts simulation by classical circuits which are (a) completely local, (b) geometrically local in one dimension, and finally (c) ``constant-depth local" in the sense of Eq.~\eqref{eq:lightconesize}. 

Let us begin with a famous illustration \cite{ghz,mermin} of quantum nonlocality. Define the $3$-qubit GHZ state
\[
|\mathrm{GHZ}\rangle=\frac{1}{\sqrt{2}}\left(|000\rangle+|111\rangle\right),
\]
which satisfies
\begin{equation}
P|\mathrm{GHZ}\rangle=|\mathrm{GHZ}\rangle \qquad P\in \{X_1X_2X_3,-X_1Y_2Y_3,-Y_1X_2Y_3,-Y_1Y_2X_3\}.
\label{eq:ghzstab}
\end{equation}
Let $b\in \{0,1\}^3$ and consider the measurement outcomes $m\in \{-1,1\}^3$ obtained by measuring each qubit $j$ of $|\mathrm{GHZ}\rangle$ in either the $X$ basis (if $b_j=0$) or the $Y$ basis (if $b_j=1$). Eq.~(\ref{eq:ghzstab}) implies that the quantum measurement statistics satisfy
\begin{equation}
i^{b_1+b_2+b_3}m_1m_2m_3=1 \qquad \text{whenever}\qquad b_1\oplus b_2\oplus b_3=0. 
\label{eq:ghz}
\end{equation}
In contrast it is not hard to show that Eq.~\eqref{eq:ghz} cannot be satisfied by a \textit{local hidden-variable model} in which the measurement outcomes $m$ are completely local functions of the measurement settings $b$ and also may depend on a shared random string $r$, that is, $m_j=m_j(b_j,r)$ for $j=1,2,3$.

Barrett et al. \cite{barrett2007} described an extension of the GHZ example which we review (with some small modifications) in Section \ref{sec:1dnonlocal}. It shows that the statistics of single-qubit measurements on the graph state corresponding to an even length cycle posess geometrically nonlocal correlations. In other words, certain measurement outcomes are correlated with measurement settings (i.e., choice of single-qubit measurement bases) that are far away with respect to the shortest path on the cycle. These measurement statistics cannot be simulated by low-depth classical circuits which are geometrically local in one dimension.

Finally, in Section \ref{sec:2dnonlocal} we turn our attention to the 2D Hidden Linear Function problem. Let us now argue that the setting is not so different from the above two examples.  Recall that the circuit from Fig.~\ref{fig:hlf} produces a uniformly random solution $z\in \{0,1\}^n$ to a given instance $A,b$. Just like in the GHZ example, here the output $z$ is obtained by measuring each qubit of an entangled quantum state in one of two possible bases.  Indeed, observe that the first two gates in the circuit prepare the graph state $|\Phi_{\mathcal{A}}\rangle$ for the graph $\mathcal{A}$ with adjacency matrix $A$, and the rest of the circuit measures each qubit $v$ in either the $X$ basis (if $b_v=0$) or the $Y$ basis 
(if $b_v=1$)\footnote{To see this note that $HZH=X$ and $(HS)^{\dagger} Z(HS)=Y$.}.

To prove Theorem \ref{thm:lowerbound} we establish that the correlations between the input $A,b$ and the output $z$ in the 2D Hidden Linear Function problem have a strong form of nonlocality that cannot be reproduced by constant-depth probabilistic classical circuits of bounded fan-in. We first suppose $\mathcal{C}_N$ is a classical circuit which solves size-$N$ instances of the 2D Hidden Linear Function problem with high probability. We restrict our attention to instances where $A$ specifies a subgraph of the grid which is an even length cycle. For each such instance we can infer (from the Barrett et al. example) a geometrically nonlocal feature of the correlations generated by $\mathcal{C}_N$. We use a probabilistic argument and Eq.~\eqref{eq:lightconesize} to show that at least one of these features is absent if $\mathcal{C}_N$ has depth $o(\log(N))$. Thus we obtain the desired lower bound on the depth of $\mathcal{C}_N$.

\subsection{Geometric nonlocality in a 1D graph state \label{sec:1dnonlocal}}
\begin{figure}
\centering
\begin{tikzpicture}
\draw (-2,0)--(-1.3,1.4);
\draw (-0.7,2.6)--(0,4);
\draw (2,0)--(1.3,1.4);
\draw (0,4)--(0.7,2.6);
\draw (-2,0)--(-0.5,0);
\draw (0.5,0)--(2,0);
\node at (0,0){\Large{...}};
\node[rotate={62}]  at (-1,2){\Large{...}};
\node[rotate={-62}]  at (1,2){\Large{...}};
\node[circle, fill=black, inner sep=2pt] at (0,4){};
\node[circle, fill=black, inner sep=2pt] at (-2,0){};
\node[circle, fill=black, inner sep=2pt] at (2,0){};
\node[circle, fill=black, inner sep=2pt] at (1.3,0){};
\node[circle, fill=black, inner sep=2pt] at (-1.3,0){};

\node[circle, fill=black, inner sep=2pt] at (-1.6,0.8){};
\node[circle, fill=black, inner sep=2pt] at (1.6,0.8){};

\node[circle, fill=black, inner sep=2pt] at (-0.4,3.2){};
\node[circle, fill=black, inner sep=2pt] at (0.4,3.2){};
\draw [decorate,decoration={brace,amplitude=10pt}, rotate=180] (-1.6,0.3)--(1.6,0.3);
\node at (0,-1){\large{$B$}};
\node at (-2,2.4){\large{$L$}};
\node at (1.9,2.4){\large{$R$}};
\node at (-2.2,-0.4){\large{$u$}};
\node at (2.1,-0.4){\large{$w$}};
\node at (0,4.4){\large{$v$}};
\draw [decorate,decoration={brace,amplitude=10pt}] (-2,0.4)--(-0.4,3.6);
\draw [decorate,decoration={brace,amplitude=10pt}] (0.4,3.6)--(2,0.4);
\end{tikzpicture}
\caption{We consider a cycle $\Gamma$ of even length $M$ and three vertices $u,v,w$ such that all pairwise distances are even. The sides $L,R,B$ may have unequal lengths\label{fig:triangle}.}
\end{figure}
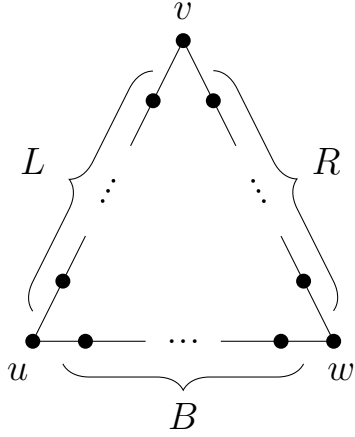
Here we present a variant of an example due to Barrett et al. \cite{barrett2007}. 
Let $\Gamma$ be the $M$-cycle graph with $M$ even. 
Suppose $u,v,w$ are vertices such that all pairwise distances between them are even. We shall say that a vertex $j$ is even (resp. odd) if it has even distance (resp. odd distance)
from $u,v,w$. It will be convenient to view $\Gamma$ as a triangle as shown in Fig.~\ref{fig:triangle}. We define sets $L,R,B$ of vertices for each of the three sides as shown. The vertices $u,v,w$ are not contained in any of these sets. Also define sets $R_{\mathrm{odd}}, R_{\mathrm{even}}$ of odd and even vertices respectively on side $R$ of the triangle, and likewise $L_{\mathrm{odd}}, L_{\mathrm{even}},B_{\mathrm{odd}}, B_{\mathrm{even}}$.

Let $|\Phi_{\Gamma}\rangle$ be the $M$-qubit graph state for $\Gamma$ as in Eq.~\eqref{eq:graphstatedefn}. For $b=b_ub_vb_w\in \{0,1\}^3$ define
\[
\mathcal{T}(b)=\left\{z\in\{0,1\}^M: \; \langle z| H^{\otimes M} S_u^{b_u} S_v^{b_v} S_w^{b_w}|\Phi_{\Gamma}\rangle \neq 0\right\}.
\]
 In other words $\mathcal{T}(b)$ describes the set of possible measurement outcomes if the qubits of $|\Phi_{\Gamma}\rangle$  in $L\cup R\cup B$ are measured in the $X$ basis and the qubits $u,v,w$ are measured in the $X$ or $Y$ basis depending on $b$ (as before, $0$ and $1$ bits of $b$ indicate measurements in the $X$ and $Y$ basis respectively).

For any two vertices $j,k\in \Gamma$ write $\mathrm{dist}_{\Gamma}(j,k)$ for the number of edges in the shortest path between them in $\Gamma$. Define
\[
D=\min \left\{\mathrm{dist}_{\Gamma}(u,v),\mathrm{dist}_{\Gamma}(v,w), \mathrm{dist}_{\Gamma}(u,w)\right\}.
\]
Our aim in this Section is to show (following Barrett et al. \cite{barrett2007}) that the relationship between input $b$ and output $z\in \mathcal{T}(b)$ is geometrically nonlocal in the following sense. 
\begin{lemma}
Consider a classical circuit which takes as input a bit string $b=b_ub_vb_w\in\{0,1\}^3$ and a random string $r\in \{0,1\}^\ell$ (drawn from some distribution $\rho$) and outputs $z\in \{0,1\}^M$. Suppose
\begin{equation}
\mathrm{Prob}\left[z\in \mathcal{T}(b)\right]>\frac{7}{8}.
\label{eq:prob1d}
\end{equation}
Then the lightcone of one of the input bits $b_i\in \{b_u,b_v,b_w\}$ contains an output bit $z_q$ such that $\mathrm{dist}_{\Gamma}(i,q)\geq D/2$.
\label{lem:triangle}
\end{lemma}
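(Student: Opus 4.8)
My plan is to prove the contrapositive: assuming every lightcone $L_{\mathcal{C}}(b_i)$ is confined to radius $<D/2$, I will show that the circuit reproduces $\mathcal{T}(b)$ with probability at most $7/8$. The first step is to replace the quantum condition $z\in\mathcal{T}(b)$ by explicit $\FF_2$-linear parity constraints on $z$. By the Fourier/HLF correspondence (Lemma~\ref{lem:pz}), $z\in\mathcal{T}(b)$ holds exactly when $z$ solves the Hidden Linear Function problem for the quadratic form $q$ of the cycle $\Gamma$ with $b$ supported on $\{u,v,w\}$, i.e. $\bigoplus_{j:x_j=1} z_j = q(x)/2$ for every $x\in\calL_q$. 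I would compute $\calL_q=\ker(A+\mathrm{diag}(b))$ over $\FF_2$, where $A$ is the cycle adjacency matrix. Since all pairwise distances of $u,v,w$ are even they lie in one colour class of the bipartition, which I call ``even''; solving the kernel then yields a two-dimensional $\calL_q$ generated by $\mathbb{1}_{\mathrm{odd}}$ together with a vector $x^{(b)}=\mathbb{1}_{\mathrm{even}}\oplus(\text{odd vertices on the arc joining the two }Y\text{-measured corners})$. This produces two kinds of constraints: an odd constraint $\bigoplus_{j\ \mathrm{odd}} z_j = 0$ valid for all eight inputs $b$, and, for each even-parity $b$, a ``GHZ'' constraint $\bigoplus_{j\in\mathrm{supp}(x^{(b)})} z_j = e_b$. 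The crucial algebraic fact, which I expect to be the main obstacle to pin down cleanly, is the sign: because the quadratic part of $q(x^{(b)})$ counts edges between ``even'' and included ``odd'' vertices in pairs it vanishes mod $4$, so $e_b=(b_u+b_v+b_w)/2$ and hence $\bigoplus_{b\ \mathrm{even}} e_b = 1$. This single parity is the seed of the GHZ contradiction.

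Next I would exploit locality. With all lightcones of radius $<D/2$ and the three balls $\mathcal{R}_u,\mathcal{R}_v,\mathcal{R}_w$ of that radius pairwise disjoint (since $\mathrm{dist}_\Gamma$ between corners is $\ge D$), every output $z_q$ is correlated with at most one of $b_u,b_v,b_w$; fixing the random string $r$, each $z_q$ is therefore a function of a single $b_i$ (if $q$ lies in $\mathcal{R}_i$) or a constant (if $q$ is in no ball). Consequently $\bigoplus_{j\ \mathrm{odd}} z_j^{(b)}$ is a \emph{separable} function $\psi_u(b_u)\oplus\psi_v(b_v)\oplus\psi_w(b_w)\oplus C$ of the three input bits. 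Imposing the odd constraint on the four inputs $000,100,010,001$ forces each $\psi_i$ to be constant, i.e. $\psi_i(0)=\psi_i(1)$; writing these relations in terms of the per-corner ``sensitivities'' $\Delta\phi^{\,c}_{\mathrm{arc}}:=\bigoplus_{j\in\mathrm{arc}_{\mathrm{odd}}\cap\mathcal{R}_c}\big(z_j|_{b_c=0}\oplus z_j|_{b_c=1}\big)$ gives three identities of the form $\Delta\phi^{\,u}_L\oplus\Delta\phi^{\,u}_B=0$, and cyclically.

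Finally I would combine the even-parity GHZ constraints. Taking the $\FF_2$-sum of the four relations for $b\in\{000,110,101,011\}$ and subtracting the odd constraint at $000$, all corner and even-arc terms cancel (each appears an even number of times with balanced inputs) and one is left with $\Delta\phi^{\,u}_L\oplus\Delta\phi^{\,v}_L\oplus\Delta\phi^{\,v}_R\oplus\Delta\phi^{\,w}_R\oplus\Delta\phi^{\,u}_B\oplus\Delta\phi^{\,w}_B=\bigoplus_{b\ \mathrm{even}}e_b=1$. But the three separability identities of the previous paragraph force exactly this six-term sum to vanish, a contradiction $0=1$. Hence for every fixed $r$ the circuit must violate at least one of these (at most eight) parity constraints, so on a uniformly random $b$ the failure probability is at least $1/8$ and the success probability at most $7/8$. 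I expect the delicate points to be the exact determination of $\mathrm{supp}(x^{(b)})$ and the sign $\bigoplus_b e_b=1$ in the first step, and, in the last step, the careful bookkeeping showing that the odd-parity inputs rigidify the local model just enough to clash with the even-parity obstruction; it is precisely the interplay of even- and odd-parity settings, rather than the even ones alone, that defeats the short-lightcone circuit.
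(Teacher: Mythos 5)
Your proposal is correct, and its skeleton matches the paper's proof: negate the lightcone conclusion, observe that then (for each fixed $r$) every output bit is a function of at most one of $b_u,b_v,b_w$, show that for each $r$ some $b$ among the eight inputs yields an output violating a parity constraint certifying $z\notin \mathcal{T}(b)$, and average to contradict the $7/8$ bound. The two key ingredients, however, are obtained differently. First, you derive the parity constraints by explicitly computing $\calL_q=\ker(A\oplus \mathrm{diag}(b))$ over $\FF_2$ and invoking Lemma~\ref{lem:pz}, getting the ``odd'' constraint for all $b$ and, for even-parity $b$, GHZ constraints with values $e_b$ satisfying $e_{000}\oplus e_{110}\oplus e_{101}\oplus e_{011}=1$; the paper instead reads the same constraints off stabilizer-group elements of $|\Phi_\Gamma\rangle$ (Claim~\ref{claim:zclass}). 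These are equivalent in content (your kernel vectors are exactly the $X$-supports of the paper's stabilizers, and your computation $q(x^{(b)})=b_u+b_v+b_w$ reproduces the phases), but your route is self-contained in HLF language; one small inaccuracy is that for odd-parity $b$ the kernel is one-dimensional (spanned by the odd indicator vector alone), not two-dimensional---harmless, since you only invoke the odd constraint at those inputs. Second, your endgame replaces the paper's reduction to a standalone algebraic fact about affine boolean functions (Claim~\ref{claim:affine}): you use the odd constraints at $000,100,010,001$ to force the per-corner sensitivities to cancel pairwise, then XOR the four even-parity GHZ constraints (the even-vertex and corner terms cancel because each $b_i$ is balanced over those four inputs, and the leftover odd-vertex baseline vanishes by the odd constraint at $000$) to obtain the six-term sensitivity sum equal to $1$, contradicting the pairwise cancellations. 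Both endgames are the same XOR algebra in different packaging: yours is more hands-on and transparent about where the GHZ obstruction enters, while the paper's isolates the algebra into a reusable claim and makes the case split (odd constraint failing somewhere vs.\ holding everywhere) explicit, which in your write-up is absorbed into the assumption that all eight inputs succeed.
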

The authors of Ref.~\cite{barrett2007} showed that the  correlations resulting from measuring the graph state $|\Phi_\Gamma\rangle$
 cannot be reproduced by a ``communication-assisted local hidden variable model''. In Lemma \ref{lem:triangle} we have phrased the result in terms of circuits.

We shall prove the lemma momentarily.  First we show that the set $\mathcal{T}(b)$ of possible measurement outcomes for the 1D graph state satisfies an identity similar to Eq.~\eqref{eq:ghz}. It will be convenient to work with $\pm1$-valued variables defined by $m_j=(-1)^{z_j}$ for $j\in\{1,2,\ldots,M\}$. Define the following products:
\begin{equation}
m_L=\prod_{j\in L_{odd}} m_j \qquad m_R=\prod_{j\in R_{odd}} m_j \qquad  m_B=\prod_{j\in B_{odd}} m_j \qquad m_{E}=\prod_{j\in R_{\mathrm{even}}\cup L_{\mathrm{even}}\cup B_{\mathrm{even}}}m_j.
\label{eq:zprod1}
\end{equation}
\begin{claim}
Let $b=b_ub_vb_w\in \{0,1\}^3$ and suppose $z\in \mathcal{T}(b)$. Then $m_Rm_Bm_L=1$. Moreover, if $b_u\oplus b_v\oplus b_w=0$ then
\begin{equation}
i^{b_u+b_v+b_w}m_um_vm_w m_E m_R^{b_u}m_B^{b_v}m_L^{b_w}=1.
\label{eq:zprod}
\end{equation}
\label{claim:zclass}
\end{claim}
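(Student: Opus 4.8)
The plan is to read each relation in Claim~2 off a suitable element of the stabilizer group of $|\Phi_\Gamma\rangle$, exactly as in the GHZ example. The crucial point is that $z\in\mathcal{T}(b)$ forces the single-qubit outcomes $m_j=(-1)^{z_j}$ to respect every stabilizer that the measurement diagonalizes. Concretely, write $U=H^{\otimes M}S_u^{b_u}S_v^{b_v}S_w^{b_w}$, so $|\psi\rangle=U|\Phi_\Gamma\rangle$ and $z\in\mathcal{T}(b)$ means $\langle z|\psi\rangle\neq 0$. Using $HXH=Z$ and $(HS)^\dagger Z(HS)=Y$ (equivalently $(HS)Y(HS)^\dagger=Z$), conjugation by $U$ sends $X_j\mapsto Z_j$ for each $j$ measured in the $X$ basis and $Y_t\mapsto Z_t$ for each $t\in\{u,v,w\}$ with $b_t=1$. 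Hence, if $P$ is a genuine Pauli carrying $X$ on $X$-measured qubits and $Y$ on $Y$-measured qubits, with $P|\Phi_\Gamma\rangle=\lambda|\Phi_\Gamma\rangle$, then $UPU^\dagger=Z_{\mathrm{supp}(P)}$, so $|\psi\rangle$ lies in the $\lambda$-eigenspace of $Z_{\mathrm{supp}(P)}$; comparing $\langle z|Z_{\mathrm{supp}(P)}|\psi\rangle$ computed two ways gives $\prod_{j\in\mathrm{supp}(P)}m_j=\lambda$ whenever $\langle z|\psi\rangle\neq 0$. Everything then reduces to exhibiting the right $P$ and computing its eigenvalue $\lambda$.

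First I would record the bipartite structure. Since the three pairwise distances and $M$ are even, the parity of the distance from any vertex to $u$, to $v$ and to $w$ agree, so ``even'' and ``odd'' are well defined, $u,v,w$ are even, and every odd vertex lies on exactly one side. Writing $g_j=X_jZ_{j-1}Z_{j+1}$ for the cycle, I would compute three elementary products in which the $Z$'s from a bipartite family of generators collide in pairs and cancel: $\prod_{j\ \mathrm{odd}}g_j=\prod_{j\ \mathrm{odd}}X_j$ and $\prod_{j\ \mathrm{even}}g_j=\prod_{j\ \mathrm{even}}X_j$ (neighbours of odd vertices are even and vice versa, each hit twice), and, for each side $X\in\{R,B,L\}$, $\prod_{j\in X_{\mathrm{odd}}}g_j=\big(\prod_{j\in X_{\mathrm{odd}}}X_j\big)Z_aZ_b$, where $a,b\in\{u,v,w\}$ are the two endpoints of that side (the only surviving $Z$'s, since interior even vertices are hit twice while each endpoint is hit once from within the side). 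Each product is a product of distinct generators, hence a $+1$-eigenvector, and carries no sign because its $X$- and $Z$-supports are disjoint. Applying the reduction above to $\prod_{j\ \mathrm{odd}}g_j=\prod_{j\ \mathrm{odd}}X_j$ (all odd qubits are $X$-measured, $\lambda=1$) immediately gives $\prod_{j\ \mathrm{odd}}m_j=1$, i.e.\ $m_Rm_Bm_L=1$, the first assertion.

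For the second assertion I would build the Pauli $P$ carrying $Y$ on the special vertices $t$ with $b_t=1$ and $X$ on those with $b_t=0$, $X$ on every even side vertex, and $X$ on the odd vertices of a set of sides chosen below. Starting from $\prod_{j\ \mathrm{even}}g_j$ (which puts $X$ on every even vertex), I multiply by $\prod_{j\in X_{\mathrm{odd}}}g_j$ for a subset $T\subseteq\{R,B,L\}$, chosen so the accumulated $Z$'s at $u,v,w$ match $b$ modulo two. Since side $R$ contributes $Z$ at $v,w$, side $B$ at $u,w$, and side $L$ at $u,v$, the requirement is the $\mathbb{F}_2$ system $t_B\oplus t_L=b_u$, $t_R\oplus t_L=b_v$, $t_R\oplus t_B=b_w$; summing the equations shows it is solvable exactly when $b_u\oplus b_v\oplus b_w=0$ (this is where the hypothesis enters), and $t_R=b_u,\ t_B=b_v,\ t_L=b_w$ is a solution, matching the exponents in the claim. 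On a vertex with $b_t=1$ the factors combine as $X_tZ_t=-iY_t$, whereas a $b_t=0$ vertex receives $X_t$ or $X_tZ_t^2=X_t$; thus the product of distinct generators equals $(-i)^{b_u+b_v+b_w}P$, so $P|\Phi_\Gamma\rangle=i^{b_u+b_v+b_w}|\Phi_\Gamma\rangle$ and $\lambda=i^{b_u+b_v+b_w}$. Reading $\lambda$ off the outcomes gives $m_um_vm_wm_Em_R^{b_u}m_B^{b_v}m_L^{b_w}=i^{b_u+b_v+b_w}$, and multiplying by $i^{b_u+b_v+b_w}$ (using $i^{2(b_u+b_v+b_w)}=1$ for even $b_u+b_v+b_w$) yields the stated identity.

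The two cancellation computations and the bookkeeping of which qubit is measured in which basis are routine. The one step demanding care is the phase tracking: verifying that the only surviving phases are the $-i$ factors from the $X_tZ_t=-iY_t$ collisions at the $b_t=1$ vertices and that all stray $Z$'s cancel in pairs everywhere else, so that $\lambda$ comes out to exactly $i^{b_u+b_v+b_w}$ rather than some other fourth root of unity. Together with checking that the two solutions of the $\mathbb{F}_2$ system differ by $\prod_{j\ \mathrm{odd}}g_j$ (and hence give consistent relations), I expect this phase accounting to be the main obstacle.
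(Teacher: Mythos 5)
Your proposal is correct and takes essentially the same route as the paper: both read the identities off stabilizer elements of $|\Phi_\Gamma\rangle$ obtained as products of the generators $g_j$, with the odd-vertex product giving $m_Rm_Bm_L=1$ and the even-vertex product times selected sides' odd-vertex products giving Eq.~\eqref{eq:zprod}. The only difference is presentational: where the paper lists the four even-parity cases $b_ub_vb_w\in\{000,110,101,011\}$ with their explicit stabilizers, you derive the same four elements uniformly by solving the $\mathbb{F}_2$ system $t_B\oplus t_L=b_u$, $t_R\oplus t_L=b_v$, $t_R\oplus t_B=b_w$ (whose solvability is exactly the parity hypothesis) and tracking the phase via $XZ=-iY$.
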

\begin{proof}
For any subset $\mathcal{I}\subseteq [M]$ define operators
\[
X(\mathcal{I})=\prod_{j\in \mathcal{I}} X_j \quad \mbox{and} \quad
g(\mathcal{I})=\prod_{j\in \mathcal{I}} g_j.
\]
Recall that $g_j$ are stabilizers of the graph state $|\Phi_\Gamma\rangle$
defined in Eq.~\eqref{eq:stabgroup} such that  $g_j|\Phi_\Gamma\rangle=|\Phi_\Gamma\rangle$
for all $j$. First note that the operator $X(R_{\mathrm{odd}}\cup L_{\mathrm{odd}}\cup B_{\mathrm{odd}})$ is in the stabilizer group 
of $|\Phi_{\Gamma}\rangle$. Indeed, we have 
\[
X(R_{\mathrm{odd}}\cup L_{\mathrm{odd}}\cup B_{\mathrm{odd}})=
g(R_{\mathrm{odd}}\cup L_{\mathrm{odd}}\cup B_{\mathrm{odd}}).
\]
Accordingly,  $|\Phi_{\Gamma}\rangle$ is in the $+1$ eigenspace of this operator. 
Therefore a measurement of each qubit in $R_{\mathrm{odd}}\cup L_{\mathrm{odd}}\cup B_{\mathrm{odd}}$
in the $X$ basis will result in outcomes $m_R$, $m_L$, $m_B$ satisfying $m_Rm_Bm_L=1$ as claimed. The four cases of Eq.~\eqref{eq:zprod} arise in the same way from the following elements of the
stabilizer group of $|\Phi_\Gamma\rangle$:
\begin{align*}
&\text{($b_ub_vb_w=000$)}\qquad X_u X_v X_w \cdot X\big(R_{\mathrm{even}}\cup L_{\mathrm{even}}\cup 
B_{\mathrm{even}}\big) = g(\{uvw\}\cup R_{\mathrm{even}}\cup L_{\mathrm{even}}\cup 
B_{\mathrm{even}}) \\
&\text{ ($b_ub_vb_w=110$)}\qquad  -Y_uY_v X_w \cdot X\big( R\cup B\cup L_{\mathrm{even}}\big)
=g(\{uvw\} \cup R\cup B \cup L_{\mathrm{even}})\\
&\text{ ($b_ub_vb_w=101$)}\qquad -Y_uX_vY_w  \cdot X\big( R\cup L\cup B_{\mathrm{even}}\big) =
g(\{uvw\} \cup R\cup L\cup B_{\mathrm{even}}) \\
&\text{ ($b_ub_vb_w=011$)}\qquad -X_uY_vY_w \cdot X\big( B\cup L\cup R_{\mathrm{even}}\big)=
g(\{uvw\} \cup B\cup L\cup R_{\mathrm{even}}).
\end{align*}
\end{proof}
\begin{proof}[Proof of Lemma \ref{lem:triangle}]
To reach a contradiction let us suppose that the hypotheses of the lemma are satisfied but the conclusion does not hold. That is, let $\mathcal{C}$ be a classical circuit satisfying Eq.~\eqref{eq:prob1d} and suppose that the lightcone $L_{\mathcal{C}}(b_u)$ only includes output bits $z_j$ where $\mathrm{dist}_\Gamma (u,j)\leq D/2-1$ (and likewise for $b_v$ and $b_w$). Therefore each output bit $z_j$ only depends on the random string $r$ as well as the nearest input bit $b_u,b_v$ or $b_w$ (if $z_j$ is equidistant to two of them it depends on neither).

Write $z=F(b,r)$ for the function which is computed by the circuit $\mathcal{C}$. Below we show that for each $r$ there exists a string $b\in \{0,1\}^3$ such that $F(b,r)\notin \mathcal{T}(b)$. This implies that when $r$ is chosen at random from some distribution $\rho$ we have
\[
\frac{1}{8} \sum_{b\in \{0,1\}^3} \mathrm{Prob}_{\rho}\left[F(b,r)\in \mathcal{T}(b)\right] =
\frac{1}{8} \cdot \mathbb{E}_{\rho} \bigg[ \# \left\{ b\in \{0,1\}^3  \, : \, F(b,r)\in \mathcal{T}(b) \right\} \bigg] \leq \frac{7}{8}.
\]
which shows that $\mathrm{Prob}_{\rho}\left[F(b,r)\in \mathcal{T}(b)\right]\leq 7/8$ for some $b$. Thus we arrive at a contradiction, which is sufficient to prove the Lemma.

It remains to show that for each $r$ there exists a $b$ such that $F(b,r)\notin \mathcal{T}(b)$. So let $r$ be fixed and consider $z=F(b,r)$ as a function of $b$. Let $m_j=(-1)^{z_j}$ and consider the products defined in Eq.~\eqref{eq:zprod1} (as a function of $b$). Suppose first that $m_Rm_Bm_L=-1$ for some $b_0\in \{0,1\}^3$. Then by Claim \ref{claim:zclass}, $F(b_0,r)\notin \mathcal{T}(b_0)$ and we are done. Next suppose that $m_Rm_Bm_L=1$ for all $b\in \{0,1\}^3$.  Since each output bit $z_j$ is a function only of the nearest input bit $b_u,b_v,b_w$ and we are considering products of values $(-1)^{z_j}$, there exist affine boolean functions $e,f,g,h:\{0,1\}^3\rightarrow \{0,1\}$ such that 
\[
m_u m_v m_w m_E=(-1)^{e(b)} \qquad m_R=(-1)^{f(b)} \qquad m_B=(-1)^{g(b)} \qquad m_L=(-1)^{h(b)}
\]
and such that $f(b)$ does not depend on $b_u$, $g(b)$ does not depend on $b_v$, $h(b)$ does not depend on $b_w$, and $f(b)\oplus g(b)\oplus h(b)=0$. Note that 
\begin{equation}
i^{b_u+b_v+b_w}m_um_vm_w m_E m_R^{b_u}m_B^{b_v}m_L^{b_w}=i^{b_u+b_v+b_w}(-1)^{e(b)+f(b)b_u+g(b)b_v+h(b)b_w}.
\label{eq:zfg}
\end{equation}
The following Claim implies that Eq.~\eqref{eq:zfg} is not equal to $+1$ for all bit strings $b$ with even hamming weight. Applying Claim \ref{claim:zclass} we see that this implies that $F(b,r)\notin \mathcal{T}(b)$ for some (even hamming weight) string $b$, completing the proof.
\begin{claim}
Suppose $e,f,g,h:\{0,1\}^3\rightarrow \{0,1\}$ are affine boolean functions.  Write $x=x_1x_2x_3\in \{0,1\}^3$. Suppose $f(x)$ does not depend on $x_1$, $g(x)$ does not depend on $x_2$, $h(x)$ does not depend on $x_3$, and that $f(x)\oplus g(x)\oplus h(x)$ is independent of $x$. Then 
\[
\sum_{x_1\oplus x_2\oplus x_3=0} i^{x_1+x_2+x_3}(-1)^{e(x)+f(x)x_1+g(x)x_2+h(x)x_3} \leq 2.
\]
\label{claim:affine}
\end{claim}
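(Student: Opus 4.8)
The plan is to evaluate the sum directly and reduce the whole claim to a single parity identity. First I would observe that the constraint $x_1\oplus x_2\oplus x_3=0$ restricts the sum to the four even-weight strings $\{000,110,101,011\}$, and that on these strings $x_1+x_2+x_3\in\{0,2\}$, so the phase $i^{x_1+x_2+x_3}$ equals $+1$ on $000$ and $-1$ on each weight-two string. In particular the whole sum is a real number, which is why "$\le 2$" makes sense. Reading off the four exponents, the sum takes the form
\[
S=(-1)^{s_0}-(-1)^{s_1}-(-1)^{s_2}-(-1)^{s_3},
\]
where $s_0=e(000)$ corresponds to $000$, and $s_1,s_2,s_3$ are the exponents $e(x)+f(x)x_1+g(x)x_2+h(x)x_3$ evaluated (mod $2$) at $110,101,011$ respectively. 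Since each term is $\pm 1$ with sign pattern $(+,-,-,-)$, one checks that $S\in\{-4,-2,0,2,4\}$, and the only value exceeding $2$ is $S=4$, which occurs exactly when $s_0=0$ and $s_1=s_2=s_3=1$. Thus it suffices to rule out this single configuration.

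Next I would make the affine structure explicit by writing $f(x)=f_0\oplus f_2 x_2\oplus f_3 x_3$ (no $x_1$ dependence), $g(x)=g_0\oplus g_1 x_1\oplus g_3 x_3$ (no $x_2$), $h(x)=h_0\oplus h_1 x_1\oplus h_2 x_2$ (no $x_3$), and $e(x)=e_0\oplus e_1 x_1\oplus e_2 x_2\oplus e_3 x_3$. Substituting into $f\oplus g\oplus h$, the hypothesis that this is independent of $x$ forces the three linear parts to cancel, giving the relations $f_2=h_2$, $f_3=g_3$, and $g_1=h_1$. Plugging the three weight-two strings into the exponents then yields, for instance,
\[
s_1=e_0\oplus e_1\oplus e_2\oplus f_0\oplus f_2\oplus g_0\oplus g_1,
\]
and analogous expressions for $s_2$ and $s_3$ (using $f(110)=f_0\oplus f_2$, $g(110)=g_0\oplus g_1$, etc.), with $s_0=e_0$.

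The crux is the parity computation. I would compute $s_0\oplus s_1\oplus s_2\oplus s_3$ and observe that every $e$-coefficient and every additive constant $f_0,g_0,h_0$ appears an even number of times and cancels, leaving only $f_2\oplus f_3\oplus g_1\oplus g_3\oplus h_1\oplus h_2$. Regrouping this as $(f_2\oplus h_2)\oplus(f_3\oplus g_3)\oplus(g_1\oplus h_1)$ and applying the three relations forced by the constraint, it collapses to $0$. Hence $s_0\oplus s_1\oplus s_2\oplus s_3=0$ identically, so the number of the $s_i$ that equal $1$ is always \emph{even}. The forbidden configuration $s_0=0,\ s_1=s_2=s_3=1$ has three ones, which is odd, and is therefore impossible. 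Consequently $S\neq 4$, and since $S\in\{-4,-2,0,2,4\}$ we conclude $S\le 2$, proving the claim. The only genuinely error-prone part is the bookkeeping in this final parity step, where one must track how each affine coefficient enters the three weight-two exponents; the hypothesis that $f\oplus g\oplus h$ is constant is used precisely here, and without it the parity would fail to vanish.
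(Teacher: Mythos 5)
Your proof is correct, and it shares the paper's skeleton---write out the affine expansions, use constancy of $f\oplus g\oplus h$ to force $f_2\oplus h_2=f_3\oplus g_3=g_1\oplus h_1=0$, and reduce the claim to ruling out the value $4$---but it finishes differently, and in a somewhat more direct way. The paper first shows that on the slice $x_1\oplus x_2\oplus x_3=0$ the exponent $f(x)x_1+g(x)x_2+h(x)x_3$ collapses to an \emph{affine} function of $x$ (by substituting $x_1x_3=x_1x_2\oplus x_1$, $x_2x_3=x_1x_2\oplus x_2$ and using the coefficient relations to cancel the quadratic term), then bounds the sum by a maximum over affine exponents $w_0+w_1x_1+w_2x_2+w_3x_3$ and rules out the value $4$ by observing that the system $w_0=0$, $w_1\oplus w_2=w_1\oplus w_3=w_2\oplus w_3=1$ is infeasible. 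You instead evaluate the exponent at the four even-weight points and show the single parity identity $s_0\oplus s_1\oplus s_2\oplus s_3=0$, which immediately excludes the unique sign pattern giving $S=4$ (your bookkeeping checks out: the $e$-coefficients and the constants $f_0,g_0,h_0$ each occur an even number of times, and the survivors regroup into the three vanishing pairs). The two arguments are in fact equivalent---a function on the four-point slice is the restriction of an affine function precisely when its values XOR to zero, and the infeasibility of the paper's linear system is itself a parity obstruction---but your version buys a shorter route: it skips the affine-reduction step entirely and replaces the case analysis over $w$ by one XOR computation, at the cost of having to evaluate and track the three exponents $s_1,s_2,s_3$ explicitly.
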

A proof of Claim \ref{claim:affine} is provided in Appendix \ref{app:affine}.
\end{proof}

\subsection{Proof of Theorem \ref{thm:lowerbound}\label{sec:2dnonlocal}} 
\begin{proof}
Let $\mathcal{C}\equiv \mathcal{C}_N$ be a classical probabilistic circuit of fan-in $\leq K$ which solves the 2D Hidden Linear Function problem with probability $>7/8$ on all instances of size $N$. That is, $\mathcal{C}$ takes as input vectors $A,b$ as in Eq.~\eqref{q2D} as well as a random bit string $r$ drawn from some (arbitrary) distribution. The output of $\mathcal{C}$ is a bit string $z$ which is a solution to the given instance with probability $>7/8$. 

We suppose that the depth $d$ of $\mathcal{C}$ satisfies
\begin{equation}
d< \frac{1}{8}\frac{\log(N)}{\log(K)}.
\label{eq:depth}
\end{equation}
Below we prove that for all sufficiently large $N$ (i.e., larger than some universal constant) this leads to a contradiction. 

Suppose $j\in V$ is a vertex of the $N\times N$ grid $G=(V,E)$. Let $\bbox{j}\subseteq V$ be a square box of size $\lfloor N^{1/2}\rfloor \times \lfloor N^{1/2}\rfloor$ centered at vertex $j$. Each box defines a subset of output variables $z_k$ contained in this box.
Choose square-shaped regions $\mathcal{U},\mathcal{V},\mathcal{W}\subseteq V$ as shown in Figure~\ref{fig:1}. Let $V_{\mathrm{even}}\subset V$ denote the set of vertices on the even sublattice of the grid. In other words $V_{\mathrm{even}}$ contains all vertices with even horizontal and vertical coordinates.  

Combining Eqs.~(\ref{eq:lightconesize},\ref{eq:depth}) we get
\begin{equation}
|L_{\mathcal{C}}(z_i)|\leq K^d<N^{\frac{1}{8}} \qquad i\in V.
\label{eq:Kd}
\end{equation}
This shows that all output bits have ``small" lightcones. Next we shall identify large sets of input bits which also have small lightcones.

For each region $\mathcal{R}\in \{\mathcal{U},\mathcal{V},\mathcal{W}\}$ define sets of good and bad vertices
\begin{align}
\mathrm{Good}(\mathcal{R})&=\{v\in \mathcal{R}\cap V_{\mathrm{even}}: |L_\mathcal{C}(b_v)|\leq N^\frac{1}{4}\} \label{good} \\
\mathrm{Bad}(\mathcal{R})&=\{\mathcal{R}\cap V_{\mathrm{even}}\}\setminus \mathrm{Good}(\mathcal{R}).
\end{align}

\begin{claim}
For $\mathcal{R}\in \{\mathcal{U},\mathcal{V},\mathcal{W}\}$ we have
\begin{equation}
|\mathrm{Good}(\mathcal{R})|=\Omega(N^2).
\label{eq:sizegood}
\end{equation}
\end{claim}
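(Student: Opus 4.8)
The plan is to exploit the duality between input and output light cones together with an elementary averaging (Markov) argument. The starting point is the observation that the correlation relation is symmetric: an output variable $z_k$ lies in $L_{\mathcal{C}}(x_j)$ if and only if the input variable $x_j$ lies in $L_{\mathcal{C}}(z_k)$, since both statements say precisely that $x_j$ and $z_k$ are correlated. Counting the set of correlated pairs $(j,k)$ in two ways therefore gives the identity
\[
\sum_{j}\, |L_{\mathcal{C}}(x_j)| \;=\; \sum_{k}\, |L_{\mathcal{C}}(z_k)|,
\]
where $j$ ranges over \emph{all} input variables (both the $A_e$ and the $b_v$) and $k$ ranges over all $n=N^2$ output variables $z_k$.

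Next I would feed in the output light cone bound $|L_{\mathcal{C}}(z_k)|\leq K^d<N^{1/8}$ of Eq.~\eqref{eq:Kd}. This yields $\sum_j |L_{\mathcal{C}}(x_j)| < N^2\cdot N^{1/8}=N^{17/8}$, and since every term is nonnegative we may discard the contributions of the $A_e$ inputs to obtain
\[
\sum_{v\in V}\, |L_{\mathcal{C}}(b_v)| \;<\; N^{17/8}.
\]
Although we only control the output light cones directly, this converts that control into a bound on the \emph{average} light cone of the input bits $b_v$; this conversion is the one slightly subtle point of the argument, and it is exactly where the duality above is used.

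The remaining step is a Markov-type estimate. By the definition~\eqref{good} every vertex $v\in\mathrm{Bad}(\mathcal{R})$ satisfies $|L_{\mathcal{C}}(b_v)|>N^{1/4}$, so
\[
N^{1/4}\,|\mathrm{Bad}(\mathcal{R})| \;<\; \sum_{v\in\mathrm{Bad}(\mathcal{R})} |L_{\mathcal{C}}(b_v)| \;\leq\; \sum_{v\in V} |L_{\mathcal{C}}(b_v)| \;<\; N^{17/8},
\]
whence $|\mathrm{Bad}(\mathcal{R})| < N^{15/8}=o(N^2)$. Finally, since each region $\mathcal{R}\in\{\mathcal{U},\mathcal{V},\mathcal{W}\}$ is a square of side $\Theta(N)$ (Figure~\ref{fig:1}), it contains $\Theta(N^2)$ vertices of the even sublattice, i.e.\ $|\mathcal{R}\cap V_{\mathrm{even}}|=\Omega(N^2)$. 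Subtracting off the $o(N^2)$ bad vertices then gives
\[
|\mathrm{Good}(\mathcal{R})| \;=\; |\mathcal{R}\cap V_{\mathrm{even}}| - |\mathrm{Bad}(\mathcal{R})| \;=\; \Omega(N^2),
\]
as claimed.

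I do not expect a serious obstacle here; the content is a counting argument rather than anything deep. The only things to handle with care are (i) invoking the light cone duality correctly to turn the per-output bound~\eqref{eq:Kd} into an average per-input bound, and (ii) confirming from Figure~\ref{fig:1} that $\mathcal{U},\mathcal{V},\mathcal{W}$ genuinely have linear dimension $\Theta(N)$, so that the $\Theta(N^2)$ size of $\mathcal{R}\cap V_{\mathrm{even}}$ dominates the $o(N^2)$ bad set and the subtraction indeed leaves $\Omega(N^2)$ good vertices.
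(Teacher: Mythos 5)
Your proof is correct and is essentially the same as the paper's: the double counting of correlated pairs $(b_v, z_k)$ via light-cone duality is exactly the paper's bipartite-graph edge count, followed by the same Markov-type bound $|\mathrm{Bad}(\mathcal{R})|\leq N^{15/8}$ and the same subtraction from $|\mathcal{R}\cap V_{\mathrm{even}}|=\Theta(N^2)$. The only cosmetic difference is that you sum over all inputs and discard the $A_e$ terms, whereas the paper restricts to the inputs $b_v$ with $v\in\mathcal{R}\cap V_{\mathrm{even}}$ from the outset.
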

\begin{proof}
Define a bipartite graph with one side of the partition labeled by input bits $b_v$ with $v\in \mathcal{R}\cap V_{\mathrm{even}}$ and the other side labeled by outputs $z_j$ with $j\in V$. An edge between $z_j$ and $b_v$ is present iff $b_v\in L_{\mathcal{C}}(z_j)$. The total number of edges $J$ in this graph satisfies
\[
|\mathrm{Bad}(\mathcal{R})|N^\frac{1}{4}\leq J\leq |V|\cdot \max_{i\in V} |L_\mathcal{C}(z_i)|\leq N^{\frac{17}{8}},
\]
where we used $|V|=N^2$ and Eq.~\eqref{eq:Kd}. Rearranging gives $|\mathrm{Bad}(\mathcal{R})|\leq N^\frac{15}{8}$. Since $|\mathcal{R}\cap V_{\mathrm{even}}|=\Theta(N^2)$ we get
\[
|\mathrm{Good}(\mathcal{R})|= |\mathcal{R}\cap V_{\mathrm{even}}|-|\mathrm{Bad}(\mathcal{R})|=\Omega(N^2).
\]
\end{proof}

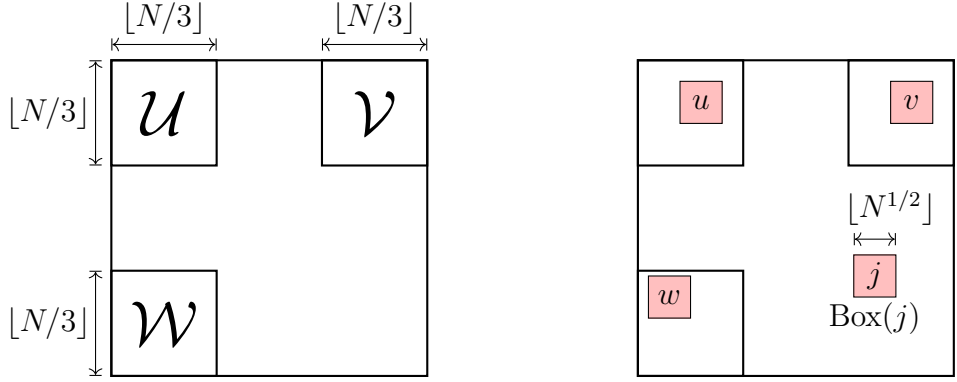
\begin{figure}
\centering
\begin{tikzpicture}[scale=0.7]
\draw[thick] (0,0) rectangle (6,6);
\draw[thick] (0,0) rectangle (2,2);
\draw[thick] (0,4) rectangle (2,6);
\draw[thick] (4,4) rectangle (6,6);
\node at (1,1){\scalebox{2}{$\mathcal{W}$}};
\node at (5,5){\scalebox{2}{$\mathcal{V}$}};
\node at (1,5){\scalebox{2}{$\mathcal{U}$}};
\draw[ |<->|]  (0,6.3)--(2,6.3);
\draw[|<->|]  (4,6.3)--(6,6.3);
\draw[|<->|]  (-0.3,0)--(-0.3,2);
\draw[|<->|]  (-0.3,4)--(-0.3,6);
\node at (1,6.8){$\lfloor N/3\rfloor $};
\node at (5,6.8){$\lfloor N/3\rfloor $};
\node at (-1.2,1){$\lfloor N/3\rfloor $};
\node at (-1.2,5){$\lfloor N/3\rfloor $};

\begin{scope}[shift={(10,0)}]
\draw[thick] (0,0) rectangle (6,6);
\draw[thick] (0,0) rectangle (2,2);
\draw[thick] (0,4) rectangle (2,6);
\draw[thick] (4,4) rectangle (6,6);

\draw [fill=pink] (0.2,1.1) rectangle (1,1.9);
\node at (0.6,1.5){$w$};

\draw [fill=pink] (0.8,4.8) rectangle (1.6,5.6);
\node at (1.2,5.2){$u$};

\draw [fill=pink] (4.8,4.8) rectangle (5.6,5.6);
\node at (5.2,5.2){$v$};

\draw[|<->|] (4.1,2.6)--(4.9,2.6);
\node at (4.8, 3.2){$\lfloor N^{1/2}\rfloor$};
\draw [fill=pink] (4.1,1.5) rectangle (4.9,2.3);
\node at (4.5,1.9){$j$};
\node at (4.5, 1.1){$\bbox{j}$};
\end{scope}
\end{tikzpicture}
\caption{Left: definition of the regions $\mathcal{U},\mathcal{V},\mathcal{W}$. Right: definition of $\bbox{j}$
and a possible choice of  vertices $u,v,w$. }
\label{fig:1}
\end{figure}

\begin{claim}
For all large enough $N$ one can choose a triple of vertices $u,v,w$ such that $u\in \mathrm{Good}(\mathcal{U})$, $v\in \mathrm{Good}(\mathcal{V})$, $w\in \mathrm{Good}(\mathcal{W})$ and
\begin{equation}
\label{c1}
\bbox{u}\subseteq \mathcal{U}, \quad \bbox{v}\subseteq \mathcal{V}, \quad \bbox{w}\subseteq \mathcal{W},
\end{equation}
\begin{equation}
\label{c2}
L_\mathcal{C} (b_u)\cap \bbox{v} =\emptyset, \quad L_{\mathcal{C}}(b_u)\cap \bbox{w} =\emptyset
\end{equation}
\begin{equation}
\label{c3}
L_\mathcal{C}(b_v)\cap \bbox{u} =\emptyset, \quad L_\mathcal{C}(b_v)\cap \bbox{w} =\emptyset
\end{equation}
\begin{equation}
\label{c4}
L_\mathcal{C}(b_w)\cap \bbox{u} =\emptyset, \quad L_\mathcal{C}(b_w)\cap \bbox{v} =\emptyset.
\end{equation}
\label{claim:triple}
\end{claim}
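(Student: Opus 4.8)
The plan is to use the probabilistic method: choose $u,v,w$ independently and uniformly at random from suitably refined versions of the Good sets, and show that all the required conditions hold simultaneously with positive probability. This automatically handles the parity requirement, since each $\mathrm{Good}(\mathcal{R})\subseteq V_{\mathrm{even}}$.

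First I would fold the geometric containment condition~\eqref{c1} into the candidate sets by defining
\[
\mathrm{Good}'(\mathcal{R})=\{v\in \mathrm{Good}(\mathcal{R}) : \bbox{v}\subseteq \mathcal{R}\}.
\]
The only vertices of $\mathrm{Good}(\mathcal{R})$ discarded here lie within distance $\lfloor N^{1/2}\rfloor/2$ of the boundary of the $\lfloor N/3\rfloor\times\lfloor N/3\rfloor$ region $\mathcal{R}$, a boundary layer of only $O(N^{3/2})$ vertices. Since $|\mathrm{Good}(\mathcal{R})|=\Omega(N^2)$ by Eq.~\eqref{eq:sizegood}, we still have $|\mathrm{Good}'(\mathcal{R})|=\Omega(N^2)$ for all large $N$.

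The core estimate is a counting bound. Fix a vertex $u$; then $|L_\mathcal{C}(b_u)|\leq N^{1/4}$ by the definition of Good. A given output bit $z_k$ lies in $\bbox{v}$ only when the center $v$ falls inside the $\lfloor N^{1/2}\rfloor\times\lfloor N^{1/2}\rfloor$ box around $k$, so at most $O(N)$ choices of $v$ place $z_k$ inside $\bbox{v}$. Summing over the $\leq N^{1/4}$ output bits of $L_\mathcal{C}(b_u)$, the number of ``bad'' centers $v$ with $L_\mathcal{C}(b_u)\cap\bbox{v}\neq\emptyset$ is at most $O(N^{1/4}\cdot N)=O(N^{5/4})$. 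Hence, for $v$ drawn uniformly from $\mathrm{Good}'(\mathcal{V})$,
\[
\mathrm{Prob}\big[L_\mathcal{C}(b_u)\cap\bbox{v}\neq\emptyset\big]\leq \frac{O(N^{5/4})}{\Omega(N^2)}=O(N^{-3/4}).
\]
The same bound holds for each of the six disjointness conditions in Eqs.~\eqref{c2}--\eqref{c4} by symmetry: each couples one input lightcone of size $\leq N^{1/4}$ (owned by a Good vertex) with one box of size $O(N)$, and in every case I condition on the vertex owning the lightcone before taking the probability over the independent choice of the vertex owning the box, so the estimate is uniform.

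Finally I would apply a union bound. Choosing $u,v,w$ independently and uniformly from $\mathrm{Good}'(\mathcal{U}),\mathrm{Good}'(\mathcal{V}),\mathrm{Good}'(\mathcal{W})$, the probability that any of the six conditions fails is at most $6\cdot O(N^{-3/4})=o(1)$, which is strictly below $1$ for all sufficiently large $N$; hence a triple satisfying~\eqref{c1}--\eqref{c4} exists. I expect the only real obstacle to be the bookkeeping in the counting bound, namely verifying that the exponents are compatible: the Good-set threshold contributes $N^{1/4}$, the box side length $\lfloor N^{1/2}\rfloor$ contributes a further factor $N$, and these together ($\tfrac14+1=\tfrac54$) must stay strictly below the exponent $2$ of the $\Omega(N^2)$ candidate sets. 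Since they do, the failure probability vanishes and the claim follows.
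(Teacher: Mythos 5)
Your proposal is correct and follows essentially the same route as the paper: restrict to Good vertices whose boxes lie inside their regions (losing only an $O(N^{3/2})$ boundary layer from the $\Omega(N^2)$ Good sets), bound the number of bad box-centers for a fixed lightcone by $N^{1/4}\cdot O(N)=O(N^{5/4})$, and finish with a union bound over the six disjointness conditions under an independent uniform choice of $u,v,w$. The only cosmetic difference is that you make the failure probability explicitly $O(N^{-3/4})=o(1)$ where the paper just notes each term is below $1/6$; the counting argument is identical.
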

\begin{proof}
Since each vertex in the grid belongs to at most $N$ boxes, we infer that a given
lightcone $L_\mathcal{C}(b_u)$ with $u\in \mathrm{Good}(\mathcal{U})$ can intersect with at most $N |L_{\mathcal{C}}(b_u)|\leq N^{\frac{5}{4}}$ boxes. Here we used the fact that $|L_{\mathcal{C}}(b_u)|\leq N^{\frac{1}{4}}$ for all $u\in \mathrm{Good}(\mathcal{U})$ by definition. The total number of vertices $v\in \mathrm{Good}(\mathcal{V})$ such that $\bbox{v}\subseteq \mathcal{V}$ is $\Omega (N^2)$, which follows from Eq.~\eqref{eq:sizegood} (and since the number of vertices $q\in \mathcal{V}$ with $\bbox{q}\not\subseteq \mathcal{V}$ is $o(N^2)$ as any such vertex $q$ must lie near the boundary of region $\mathcal{V}$). Thus if $u,v,w$ are picked  uniformly at random from the sets $\mathrm{Good}(\mathcal{U})$, $\mathrm{Good}(\mathcal{V})$ and $\mathrm{Good}(\mathcal{W})$ respectively subject to Eq.~(\ref{c1}) then 
\[
\mathrm{Prob}[L_\mathcal{C}(b_u) \cap \bbox{v}\ne \emptyset] \le O\left(\frac{N^{\frac{5}{4}}}{N^{2}}\right)<\frac16
\] 
for large enough $N$. A similar bound applies to the five other combinations of vertices that appear in Eqs.~(\ref{c2},\ref{c3},\ref{c4}).
By the union bound, there exists at least one choice of $u,v,w$ that satisfies all
conditions Eqs.~(\ref{c1},\ref{c2},\ref{c3},\ref{c4}).
\end{proof}
Below we consider cycles $\Gamma$ that are subgraphs of the grid $G$. 
\begin{claim}The following holds for all sufficiently large $N$. Fix some triple of vertices $u\in \mathrm{Good}(\mathcal{U})$, $v\in \mathrm{Good}(\mathcal{V})$, $w\in \mathrm{Good}(\mathcal{W})$
satisfying Eqs.~(\ref{c1},\ref{c2},\ref{c3},\ref{c4}). Then there exists a cycle $\Gamma$ containing
$u,v,w$ such that the lightcones $L_\mathcal{C}(b_u)$, $L_\mathcal{C}(b_v)$, $L_\mathcal{C}(b_w)$
contain no vertices of $\Gamma$ lying outside 
of $\bbox{u}\cup \bbox{v}\cup \bbox{w}$.
\label{claim:gamma}
\end{claim}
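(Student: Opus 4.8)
The plan is to build the cycle by a counting argument over a large family of candidate arcs that are pairwise disjoint away from the three boxes, so that the few ``forbidden'' lightcone vertices can spoil only a few candidates. Write $F = L_{\mathcal{C}}(b_u) \cup L_{\mathcal{C}}(b_v) \cup L_{\mathcal{C}}(b_w)$ for the union of the three lightcones. Since $u \in \mathrm{Good}(\mathcal{U})$, $v \in \mathrm{Good}(\mathcal{V})$, $w \in \mathrm{Good}(\mathcal{W})$, the definition \eqref{good} gives $|F| \le 3 N^{1/4}$. The goal is to produce a simple cycle through $u,v,w$ whose vertices lying outside $\bbox{u} \cup \bbox{v} \cup \bbox{w}$ avoid $F$ entirely. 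Note that \eqref{c2}--\eqref{c4} then additionally force each individual lightcone, say $L_{\mathcal{C}}(b_u)$, to meet $\Gamma$ only inside its own box $\bbox{u}$ (since $L_{\mathcal{C}}(b_u)$ is disjoint from $\bbox{v}$ and $\bbox{w}$ and, by construction, from the part of $\Gamma$ outside all boxes); this localization is exactly what the downstream application of Lemma~\ref{lem:triangle} needs.

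First I would split the cycle into three arcs and route them in pairwise-disjoint regions. Using the corner geometry of Figure~\ref{fig:1} ($\mathcal{U}$ top-left, $\mathcal{V}$ top-right, $\mathcal{W}$ bottom-left), I partition the part of the grid outside the three boxes into three regions that are pairwise disjoint: a horizontal corridor joining $\bbox{u}$ to $\bbox{v}$, a vertical corridor joining $\bbox{w}$ to $\bbox{u}$, and the remaining right-and-bottom region joining $\bbox{v}$ to $\bbox{w}$. Condition \eqref{c1} guarantees that each box, of side only $\lfloor N^{1/2}\rfloor$, sits strictly inside its $\lfloor N/3\rfloor \times \lfloor N/3\rfloor$ region, so there is ample room. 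Inside each corridor I construct $T$ candidate paths between the two relevant boxes that are vertex-disjoint outside the boxes, with $3 N^{1/4} < T = O(N^{1/2})$ (e.g.\ by Manhattan routing along $T$ distinct rows or columns, with disjoint connectors to $T$ distinct exit vertices on each box boundary; the boxes have perimeter $\Theta(N^{1/2})$, comfortably supporting the required exits). All members of one family share the endpoint $u$, $v$ or $w$, but that vertex lies inside a box and is therefore exempt from the disjointness requirement.

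Second I would apply the counting step. Each vertex of $F$ outside the boxes lies in at most one of the three disjoint regions, hence on the outside-box portion of at most one path of at most one family. Thus at most $|F| \le 3N^{1/4} < T$ of the candidates are spoiled, so \emph{each} of the three families contains at least one path whose outside-box portion avoids $F$; fix one such good path per family. Because the three regions are disjoint outside the boxes, the chosen $uv$-, $vw$- and $wu$-paths meet one another only inside $\bbox{u}\cup\bbox{v}\cup\bbox{w}$; after routing them inside each box so that the two incident arcs share only the corresponding special vertex, their union is a simple cycle $\Gamma$ through $u,v,w$ with no exterior vertex in $F$, which is the assertion of the claim. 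I would also record, for later use, that since $u,v,w \in V_{\mathrm{even}}$ and $G$ is bipartite every arc of $\Gamma$ has even length, so $\Gamma$ has even length and all pairwise distances along $\Gamma$ are even, and that since the arcs run between distant corners the minimal pairwise distance along $\Gamma$ is $\Theta(N)$.

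The main obstacle is the explicit disjoint routing of the second paragraph: one must partition the exterior of the boxes into three corridors that each connect the correct pair of boxes, overlap only inside boxes, and each admit more than $3N^{1/4}$ internally vertex-disjoint members. Once this is set up, the counting step and the assembly into a cycle are essentially automatic. The delicate points are the junctions: near the top-left corner the $uv$- and $wu$-corridors both abut $\bbox{u}$, and similar meetings occur at $\bbox{v}$ and $\bbox{w}$. Assigning the two arcs incident to each box to disjoint sides of that box's boundary, and letting the corridors meet only inside the boxes, is what keeps the three families from colliding in the exterior.
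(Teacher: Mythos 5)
Your proposal is correct and follows essentially the same route as the paper: both construct $\Theta(N^{1/2})$ pairwise vertex-disjoint candidate paths between each pair of boxes, exploit the bound $|L_\mathcal{C}(b_i)|\leq N^{1/4}$ for good vertices to conclude that some triple of paths avoids all three lightcones, and then complete that triple to a cycle by adding segments inside $\bbox{u}\cup\bbox{v}\cup\bbox{w}$. The only difference is presentational: the paper phrases the pigeonhole step as a union bound over uniformly random path choices (each intersection event having probability $<1/9$), while you count spoiled paths directly; these are equivalent.
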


\begin{figure}[h]
\centering
\begin{tikzpicture}
\draw[thick] (0,0) rectangle (6,6);

\draw [fill=pink] (0.2,1.1) rectangle (1,1.9);
\node at (0.6,1.5){\scalebox{2}{$w$}};

\draw [fill=pink] (0.8,4.8) rectangle (1.6,5.6);
\node at (1.2,5.2){\scalebox{2}{$u$}};

\draw [fill=pink] (4.8,4.8) rectangle (5.6,5.6);
\node at (5.2,5.2){\scalebox{2}{$v$}};

\draw[thick,dashed] (1.6,5.4)--(4.8,5.4);
\draw[thick,dashed] (1.6,5.2)--(4.8,5.2);
\draw[thick,dashed] (1.6,5.0)--(4.8,5.0);

\draw[thick,dashed] (5,4.8)--(5,1.7)--(1,1.7);
\draw[thick,dashed] (5.2,4.8)--(5.2,1.5)--(1,1.5);
\draw[thick,dashed] (5.4,4.8)--(5.4,1.3)--(1,1.3);

\draw[thick,dashed] (1,4.8)--(1,3.2)--(0.4,3.2)--(0.4,1.9);
\draw[thick,dashed] (1.2,4.8)--(1.2,3)--(0.6,3)--(0.6,1.9);
\draw[thick,dashed] (1.4,4.8)--(1.4,2.8)--(0.8,2.8)--(0.8,1.9);

\node at (3.2,4.6){\scalebox{0.8}{$\gamma(u,v)$}};
\node at (4.4,2.7){\scalebox{0.8}{$\gamma(v,w)$}};

\node at (2, 3.3){\scalebox{0.8}{$\gamma(u,w)$}};

\end{tikzpicture}
\caption{Pairwise disjoint paths $\gamma$ connecting the boxes
$\bbox{u}$, $\bbox{v}$, $\bbox{w}$. The number of paths connecting each pair
of boxes is $\lfloor N^{1/2}\rfloor$. }
\label{fig:2}
\end{figure}
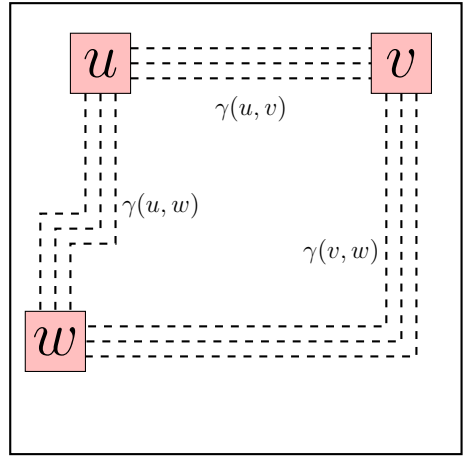

\begin{proof}
Indeed, since each box  has size $\lfloor N^{1/2}\rfloor \times \lfloor N^{1/2}\rfloor$, one can choose
$\lfloor N^{1/2}\rfloor$ pairwise disjoint paths $\gamma$ that connect 
any pair of boxes $\bbox{u}$, $\bbox{v}$, $\bbox{w}$, see Figure~\ref{fig:2}.
Let $\gamma(a,b)$ be a path connecting $\bbox{a}$ and $\bbox{b}$,
where $a\ne b \in \{u,v,w\}$.
Any triple of paths $\gamma(u,v)$, $\gamma(v,w)$, $\gamma(u,w)$ can be completed
to a cycle $\Gamma$ that contains $u,v,w$  by adding the missing segments of the cycle  inside 
the boxes $\bbox{u}$, $\bbox{v}$, $\bbox{w}$.
Since $L_\mathcal{C}(b_u)$ has size at most $N^{\frac{1}{4}}$ (recall that $u$ is a good vertex)
 and each vertex $q\in V$ belongs to at most one
path $\gamma$, we infer that $L_\mathcal{C}(b_u)$ intersects with at most $N^{\frac{1}{4}}$ paths $\gamma$.
Thus if we pick the path $\gamma(u,v)$ uniformly at random
among all $\lfloor N^{1/2}\rfloor$ possible choices then
\[
\mathrm{Prob}[L_\mathcal{C}(b_u)\cap \gamma(u,v)\ne \emptyset] \le \frac{N^{\frac{1}{4}}}{\lfloor N^{1/2}\rfloor}< \frac19
\]
for large enough $N$. The same bound applies to eight remaining  combinations of a lightcone
$L_\mathcal{C}(b_u)$, $L_\mathcal{C}(b_v)$, $L_\mathcal{C}(b_w)$ and a path $\gamma$.
By the union bound, there exists at least one triple of paths $\gamma(u,v)$, $\gamma(v,w)$, $\gamma(u,w)$
that do not intersect with $L_\mathcal{C}(b_u)$, $L_\mathcal{C}(b_v)$, $L_\mathcal{C}(b_w)$.
This gives the desired cycle $\Gamma$.
\end{proof}
Let $u,v,w$ and $\Gamma$ be chosen as described in Claim \ref{claim:gamma}. Let $M$ be the number of vertices in $\Gamma$. Recall that $u,v,w\in V_{\mathrm{even}}$ and therefore all pairwise distances between them along $\Gamma$ are even. Consider the subset of instances $(A,b)$ of the 2D Hidden Linear Function problem where
\begin{equation}
A_e=\begin{cases} 1 & \text{if $e$ is an edge of $\Gamma$}\\0 & \text{otherwise.}\end{cases} \qquad \quad \text{and} \qquad  \quad b_j=0 \quad \text{if} \quad j\in V\setminus \{u,v,w\}.
\label{eq:AB}
\end{equation}
There are $8$ such instances corresponding to choices of input bits $b_u,b_v,b_w\in \{0,1\}$. By fixing inputs to the circuit $\mathcal{C}$ in this way and looking only at output bits $z_j$ with $j\in \Gamma$ we obtain a classical circuit $\mathcal{D}$ which takes a three-bit string  $b_{u}b_{v}b_{w}\in \{0,1\}^3$ and a random string $r$ as input and outputs $z_\Gamma\in \{0,1\}^M$. For any input bit $b_i\in \{b_u,b_v,b_w\}$ we have 
\begin{equation}
L_{\mathcal{D}}(b_i)\subseteq L_{\mathcal{C}}(b_i)
\label{eq:lcones}
\end{equation}
since any pair of input/output variables which are correlated in $\mathcal{D}$ are by definition also correlated in $\mathcal{C}$. Our assumption that $\mathcal{C}$ solves the 2D Hidden Linear Function problem with probability greater than $7/8$ implies 
\begin{equation}
\mathrm{Prob}\left[\langle z_\Gamma| H^{\otimes M}S_u^{b_u}S_{v}^{b_v} S_w^{b_w}|\Phi_{\Gamma}\rangle \neq 0\right]>\frac{7}{8}.
\label{eq:assumptionimplies}
\end{equation}
To see this, recall from Lemma \ref{lem:pz} that the circuit from Fig.~\ref{fig:hlf} produces a uniformly random solution to the 2D Hidden Linear Function problem, and that the state of output qubits in $\Gamma$ after applying the circuit to inputs $A,b$ as in Eq.~\eqref{eq:AB} is 
\[
H^{\otimes M}S_u^{b_u}S_{v}^{b_v} S_w^{b_w}|\Phi_{\Gamma}\rangle.
\]
Using Eq.~\eqref{eq:assumptionimplies} and applying Lemma \ref{lem:triangle} with the cycle $\Gamma$ constructed above we infer that the lightcone $L_{\mathcal{D}}(b_i)$ of one of the input bits
$b_i\in \{b_u,b_v,b_w\}$ contains at least one output bit $z_j$ such that $j\in \Gamma$ and the distance
between $i$ and $j$ along the cycle $\Gamma$ is $\Omega(N)$.  By Eq.~\eqref{eq:lcones} the same is true for the lightcone $L_{\mathcal{C}}(b_i)$. For all sufficiently large $N$ this contradicts Claims \ref{claim:triple},\ref{claim:gamma}. Indeed, by Claim \ref{claim:gamma}, the vertex $j \in L_C(b_i)\cap \Gamma$ must lie in one of $\bbox{u}, \bbox{v}$ or $\bbox{w}$, and since $L_\mathcal{C}(b_i)$ has no intersection with $\bbox{k}$  ($i\neq k$) by Claim \ref{claim:triple}, this implies that $j\in \bbox{i}$. But the distance from $i$ to any vertex inside $\bbox{i}$ is $\leq N^{1/2}$. We conclude that Eq.~\eqref{eq:depth} is false for all sufficiently large $N$.
\end{proof}

\section{Conclusions and open problems \label{sec:conc}}
We have shown that shallow quantum circuits are more powerful than their classical counterparts. Our work raises several questions:

Does the 2D Hidden Linear Function problem resist simulation by more powerful classical circuit families such as $\AC^{0}$ (constant-depth circuits with unbounded fan-in)? The constant $7/8$ appearing in Theorem \ref{thm:lowerbound} is unlikely to be optimal--can it be replaced by a vanishing function of $N$? A recursive variant of the Bernstein-Vazirani problem is known to provide a superpolynomial speedup in query complexity \cite{BV}. Is there any use in defining a recursive variant of the Hidden Linear Function problem? 

A challenging open question is to establish a separation between the power of quantum and classical circuits for sampling problems. Let $U_n$ be a (unitary implemented by a) $n$-qubit quantum circuit and consider the output probability distribution 
\[
\rho_n(z)=|\langle z|U_n|0^n\rangle|^2.
\]
Does there exist a constant-depth family of quantum circuits $\{U_n\}$ which sample distributions $\rho_n$ that cannot be sampled by constant-depth classical circuits? A powerful tool that might be used to address this question is given in Ref. \cite{viola2014extractors}. In particular, the author shows that any distribution over $n$-bit strings with linear min-entropy (i.e., $\tilde{\Theta}(n)$) which is generated by applying a constant-depth classical circuit to a uniformly random input string can be expressed as a convex combination of simple distributions (``bit-block sources") with linear min-entropy. We do not know if the distributions sampled by constant-depth quantum circuits have this property.


\section*{Acknowledgments}
SB and DG acknowledge support from the IBM Research Frontiers Institute. RK is supported by the Technische Universit\"at M\"unchen -- Institute for Advanced Study, funded by the German Excellence Initiative and the European Union Seventh Framework Programme under grant agreement no.~291763.

\appendix
\section{Proof of Claim \ref{claim:partialfourier2} \label{app:correctness}}

\begin{proof}
Define a function $J\, : \, \FF_2^n\times \FF_2^n \to \ZZ_4$
\begin{equation}
\label{Jxy}
J(x,y)=q(x\oplus y) - q(x) -q(y), \qquad x,y\in \FF_2^n.
\end{equation}

Let us show that $J(x,y)$ is 
 a bilinear form, that is, there exists a symmetric binary matrix $B$ such that 
\begin{equation}
\label{J}
J(x,y)=2 \sum_{\alpha,\beta=1}^n B_{\alpha,\beta} x_\alpha y_\beta
\end{equation}
for all $x,y$. Indeed,  a direct inspection of Eq.~\eqref{q} shows that $q(x)$ obeys the following 
identity\footnote{By linearity, it suffices to check this identity for the special cases
$q(x)=2x_\alpha x_\beta$ and $q(x)=x_\alpha$.}:
\begin{equation}
\label{q'''=0}
q(x\oplus y \oplus z) -q(x\oplus y) -q(y\oplus z) - q(z\oplus x) + q(x) + q(y) + q(z)=0
\end{equation}
for all $x,y,z$. 
The above expression can be viewed as as a discrete version of the third derivative of $q$ which vanishes
because $q$ is a quadratic form. 
Combining Eqs.~(\ref{Jxy},\ref{q'''=0}) one gets
\begin{equation}
\label{bilinear}
J(x\oplus y,z)=J(x,z) + J(y,z) 
\end{equation}
for all $x,y,z$.  Choosing $x=y$ gives $0=2J(x,z)$, that is, the function $J(x,z)$ takes 
values $0,2$ and we can write $J(x,y)=2J'(x,y)$, where $J'(x,y)$ takes values in $\ZZ_2$.
By definition, $J'(x,y)=J'(y,x)$ and $J'(x\oplus y,z)=J'(x,z)\oplus J'(y,z)$ for all $x,y,z$.
This is possible only if $J'(x,y)$ is a symmetric bilinear form over the binary field, that is, 
$J'(x,y)=x^T B y \pmod{2}$ for some  symmetric binary matrix $B$.
This proves Eq.~(\ref{J}).

The definition of $\calL_q$ and Eqs.~(\ref{Jxy},\ref{J}) imply that 
\begin{equation}
\label{kerB}
\calL_q=\ker{(B)}
\end{equation}
which gives 
\begin{equation}
\label{B}
\ker(B) \cap \calK=0,
\end{equation}
see Eq.~(\ref{LK}).
We claim that for any  $z\in \FF_2^n$ there exists
a vector $w\in \calK$ such that 
\begin{equation}
\label{tricky1}
z^T x =w^T B x \quad  \mbox{for all} \quad  x\in \calK.
\end{equation}
Indeed, note that $(\calX\cap \calY)^\perp =\calX^\perp + \calY^\perp$
for any linear subspaces $\calX,\calY\subseteq \FF_2^n$.
Let $\im{B}\equiv \mathrm{span}\{Bx\, : \, x\in \FF_2^n\}$.
Using the identity
\[
\ker{(B)}^\perp=\im{B^T}=\im{B}
\]
and  taking the dual of Eq.~(\ref{B}) gives
\begin{equation}
\label{B1}
\im{B}+ \calK^\perp=\FF_2^n
\end{equation}
Choose any vector $z\in \FF_2^n$ and write it as 
\[
z=u\oplus v,\quad \mbox{where} \quad u\in \im{B} \quad \mbox{and} \quad v\in \calK^\perp.
\]
This is always possible due to Eq.~(\ref{B1}). 
Let $u=Bu'$ for some $u'\in \FF_2^n$.
From Eq.~(\ref{LK}) we infer that $u'=w\oplus w'$ for some $w\in \calK$ and $w'\in \calL_q$.
Putting together the above facts we see that any vector $z\in \FF_2^n$ can be written as 
\[
z=Bu'\oplus v = B(w\oplus w') \oplus v,  \quad \mbox{where} \quad w\in \calK, \quad  w'\in \calL_q,
\quad v\in \calK^\perp.
\]
Note that $Bw'=0$ due to Eq.~(\ref{kerB}). Thus
$z^T  x =w^T Bx$ for all $x\in \calK$, as claimed in Eq.~(\ref{tricky1}).
From Eq.~(\ref{tricky1}) one gets
\[
(-1)^{z^T x} \cdot i^{q(x)} = 
(-1)^{w^T B x} \cdot i^{q(x)}=
i^{J(w,x) + q(x)} = i^{q(w\oplus x) -q(w)} \qquad \mbox{for all $x\in \calK$}.
\]
Therefore
\[
\Gamma(\calK,z)= \sum_{x\in \calK}(-1)^{z^T x} \cdot i^{q(x)}=
i^{-q(w)} \cdot \sum_{x\in \calK} i^{q(w\oplus x)}=
i^{-q(w)} \cdot \sum_{x\in \calK} i^{q(x)}.
\]
This shows that the absolute value of $\Gamma(\calK,z)$ does not depend on $z$.
Let $C\equiv |\Gamma(\calK,z)|^2$.
Combining Eqs.~(\ref{GG},\ref{GG1}) and Claim~\ref{claim:partialfourier1}  one gets
\[
1=\sum_{z\in \FF_2^n} p(z) = \frac{ |\calL_q^\perp| \cdot |\calL_q|^2 \cdot C}{4^n}
=\frac{|\calL_q| \cdot C}{2^n},
\]
which proves the claim. 
\end{proof}


\section{Proof of Claim \ref{claim:affine} \label{app:affine}}
\begin{proof}

Write 
\begin{align}
e(x)&=e_0\oplus e_1x_1\oplus e_2x_2\oplus e_3x_3  \quad &e_0,e_1,e_2,e_3\in \{0,1\}\\
f(x)&=f_0\oplus f_2x_2\oplus f_3x_3 \qquad &f_0,f_2,f_3\in \{0,1\}\\
g(x)&=g_0\oplus g_1x_1\oplus g_3x_3  \qquad &g_0,g_1,g_3\in \{0,1\}\\
h(x)&=h_0\oplus h_1x_1\oplus h_2x_2.   \qquad &h_0,h_1,h_2\in \{0,1\}
\end{align}
The fact that $f(x)\oplus g(x)\oplus h(x)$ is a constant function implies 
\begin{equation}
f_2\oplus h_2=f_3\oplus g_3=g_1\oplus h_1=0.
\label{eq:coefs}
\end{equation}
We have
\begin{align}
f(x)x_1&+g(x)x_2+h(x)x_3\nonumber\\
&=f_0x_1+g_0x_2+h_0x_3+(f_2+g_1)x_1x_2+(f_3+h_1)x_1x_3+(g_3+h_2)x_2x_3.
\label{eq:val}
\end{align}
For all $x$ satisfying $x_1\oplus x_2\oplus x_3=0$ we have $x_1x_3=x_1x_2\oplus x_1$ and $x_2x_3=x_1x_2\oplus x_2$. Using this fact and Eqs.\eqref{eq:val}, \eqref{eq:coefs} we get
\[
(-1)^{f(x)x_1+g(x)x_2+h(x)x_3}=(-1)^{(f_0+f_3+h_1)x_1+(g_0+g_3+h_2)x_2+h_0x_3}
\]
whenever $x_1\oplus x_2\oplus x_3=0$. Noting that the exponent on the right hand side is an affine boolean function, and that $e(x)$ is also an affine boolean function we get
\begin{align}
\sum_{x_1\oplus x_2\oplus x_3=0} i^{x_1+x_2+x_3}(-1)&^{e(x)+f(x)x_1+g(x)x_2+h(x)x_3}\nonumber\\
&\leq \max_{w\in \{0,1\}^4} \sum_{x_1\oplus x_2\oplus x_3=0} i^{x_1+x_2+x_3}(-1)^{w_0+w_1x_1+w_2x_2+w_3x_3}\label{eq:lastline}\\
&\leq  2\nonumber
\end{align}
Since each summand in Eq.~\eqref{eq:lastline} is $\pm 1$, the last line is equivalent to the statement that the sum is strictly less than $4$, which follows from the fact that the following system of equations over $\mathbb{F}_2$ has no solution:
\begin{equation}
w_0=0 \qquad w_1\oplus w_2=1 \qquad w_1\oplus w_3=1 \qquad w_2\oplus w_3=1.
\label{eq:weqs}
\end{equation}
(Note that Eq.~\eqref{eq:weqs} would be necessary for Eq.~\eqref{eq:lastline} to be equal to $4$, as can be seen by considering $x=x_1 x_2 x_3\in \{000,110,101,011\}$.)
\end{proof}

\bibliographystyle{unsrt}
\bibliography{constdepth}
\end{document}